\newtheorem{theorem}{Theorem}[section]
\newtheorem{lemma}[theorem]{Lemma}
\newcommand{\vast}{\bBigg@{4}}
\newcommand{\Vast}{\bBigg@{5}}
\newcommand{\eat}[1]{}
\newcommand{\ie}{{\it i.e.}}
\newcommand{\minimize}[1]{\underset{#1}{\mbox{minimize}}}
\newcommand{\argmin}[1]{\underset{#1}{\mbox{argmin}}}
\newcommand{\argmax}[1]{\underset{#1}{\mbox{argmax}}}
\newcommand{\st}{\mbox{subject to}}
\newcommand{\la}{\langle}
\newcommand{\ra}{\rangle}
\newcommand{\vect}{\mbox{vec}}
\definecolor{dgreen}{rgb}{0,0.5,0}
\title{G-AMA: Sparse Gaussian graphical model estimation via alternating minimization}
\author{ \large{Onkar Anant Dalal} \\
 \textit{{Institute for Computational \& Mathematical Engg,}}\\
 \textit{{Stanford University, USA}}\\
 \textit{onkar@alumni.stanford.edu}\\
\large{Bala Rajaratnam} \\
\textit{{Department of Statistics,}} \\
\textit{{Stanford University, USA}} \\
\textit{brajarat@stanford.edu}
} 
\date{}
\begin{document}

\pagenumbering{gobble}
\vspace*{-70pt}
   {\let\newpage\relax\maketitle}

\begin{abstract}
Several methods have been recently proposed for estimating sparse Gaussian graphical models using $\ell_{1}$ regularization on the inverse covariance matrix. Despite recent advances, contemporary applications require methods that are even faster in order to handle ill-conditioned high dimensional modern day datasets. In this paper, we propose a new method, \text{G-AMA}, to solve the sparse inverse covariance estimation problem using Alternating Minimization Algorithm (\text{AMA}), that effectively works as a proximal gradient algorithm on the dual problem. Our approach has several novel advantages over existing methods. First, we demonstrate that \text{G-AMA} is faster than the previous best algorithms by many orders of magnitude and is thus an ideal approach for modern high throughput applications. Second, global linear convergence of \text{G-AMA} is demonstrated rigorously, underscoring its good theoretical properties. Third, the dual algorithm operates on the covariance matrix, and thus easily facilitates incorporating additional constraints on pairwise/marginal relationships between feature pairs based on domain specific knowledge. Over and above estimating a sparse inverse covariance matrix, we also illustrate how to (1) incorporate constraints on the (bivariate) correlations and, (2) incorporate equality (equisparsity) or linear constraints between individual inverse covariance elements. Fourth, we also show that \text{G-AMA} is better adept at handling extremely ill-conditioned problems, as is often the case with real data. The methodology is demonstrated on both simulated and real datasets to illustrate its superior performance over recently proposed methods.
\end{abstract}

 
\section{Introduction}
\label{s:intro}
In this paper, we consider the problem of sparse inverse covariance estimation for undirected Gaussian graphical models using $\ell_{1}$ regularized maximum likelihood estimation (MLE). Given $n$ realizations of a $p$ dimensional Gaussian random vector, with population and sample covariance matrix denoted by $S$ and $\Sigma$ respectively, the goal is to estimate $\Sigma \in \bm{S}_{++}^{p}$ so that $\hat{\Sigma}^{-1}$ is sparse. For the multivariate Gaussian distribution, the sparsity in the inverse covariance is related to the conditional independence among the features. Specifically, two features $i$ and $j$ are conditionally independent, if and only if the corresponding entry in the inverse covariance matrix is zero, i.e.,$\Sigma^{-1}_{ij} = 0$. Since the MLE is formulated in terms of the inverse covariance $X$, adding an $\ell_{1}$ regularization with a penalty parameter $\lambda$ induces sparsity in the estimated inverse covariance matrix. The regularized maximum likelihood estimation problem in terms of $\Sigma^{-1} = X$ is given by 
\begin{alignat}{1}
\label{mainprob}
\minimize{X} \ & - \log\det X  + \langle S, X \rangle + \lambda \|X\|_{1} \nonumber \\
\st & \ \ X \in \bm{S}_{++}^{p}
\end{alignat}
where $\|X\|_{1} = \sum_{i,j} |X_{ij}|$. The optimization problem (\ref{mainprob}) is a convex minimization problem in $X$. For $\lambda = 0$, the solution of Problem (\ref{mainprob}), given by $X^{(0)} = (\hat{\Sigma}^{(0)})^{-1} = S^{-1}$ is an unbiased estimator of the covariance matrix, but is not well defined for $n < p$. However, for any $\lambda > 0$ the estimate is well defined for any $n$, with sparsity in $\hat{\Sigma}^{-1}$ increasing for higher values of $\lambda$. 

The dual of problem (\ref{mainprob}) can be formulated in terms of the covariance matrix $\Sigma = Y$ as 
\begin{alignat}{1}
\label{maindual}
\minimize{Y} \ & -\log\det Y - p\nonumber \\
\st & \  \  \| Y - S \|_{\infty} \leq \lambda.
\end{alignat}
In this paper, we propose a new algorithm called Graphical-Alternating Minimization Algorithm (\text{G-AMA}), which solves (\ref{mainprob}) using the Alternating Minimization Algorithm (\text{AMA}) proposed by \cite{tseng1991applications}. In \cite{tseng1991applications}, \text{AMA} was shown to solve the dual problem with the Forward-Backward Splitting (\text{FBS}) method from \cite{rockafellar2011variational} and therefore converges linearly for strongly monotone operators. The iterates of \text{G-AMA} are always maintain a feasible dual estimate for the covariance matrix $\hat{\Sigma}$. This is specifically useful while solving large problems with budgeted time and early termination. In addition, \text{G-AMA} is very fast for very small values of $\lambda$ where other state-of-the-art methods converges slowly in comparison since the solutions are highly ill-conditioned. 

In many modern practical applications, domain specific knowledge on the covariance structure is known but is not always utilized to its full extent. This knowledge is often available on the covariance matrix itself, as compared to the inverse covariance matrix, since the sample covariance matrix $S$ is still computable even in the sample starved setting albeit being only positive semi-definite. There are many such practical examples. In genomics it is well known that pairwise correlations of gene (expressions) with genes further downstream are  smaller in magnitude. In the environmental sciences when modeling spatial fields, it is often the case that correlations between distant spatial points are less pronounced. As a relative by-product of our \text{AMA} analysis, we investigate how to modify \text{G-AMA} and other methods to incorporate bound constraints on bivariate covariances if such domain specific information is readily available. We demonstrate that the \text{G-AMA} formulation facilitates such analysis quite easily.

The remainder of the paper is organized as follows: Section 1 provides a brief survey of prior work for solving the sparse inverse covariance estimation problem. The methodology and details of \text{G-AMA} are given in Section 2, followed by connection with other approaches. Section 3 considers convergence analysis of \text{G-AMA} and Section 4 extends the results in Section 3 to two generalizations of the main problem. A detailed account of numerical experiments on synthetic as well as real-life datasets is given in Section 5. This section also demonstrates the use of sparse inverse covariance for portfolio optimization. Further details of the \text{G-AMA} method, proofs and additional details on the portfolio optimization application are provided in the appendix.
 
\subsection{Prior Work}
A number of algorithms have been proposed to solve the primal problem (\ref{mainprob}) and its dual problem (\ref{maindual}). These algorithms can be briefly classified into two classes, namely, the block-coordinate descent methods and the proximal gradient/Newton methods. The proposed method \text{G-AMA} belongs to the second class. In this subsection, we briefly describe algorithms that have been proposed thus far. 

 
\subsection*{Block Coordinate Descent Methods}
A dual block coordinate descent which solves a box-QP for each coordinate was first proposed in \cite{banerjee2008model}. In \cite{friedman2008sparse}, Friedman et al. noticed that each step in dual of the box-QP is equivalent to solving a lasso problem, and proposed the \text{glasso} algorithm. The block coordinate descent in \text{glasso} was improved upon by applying similar techniques to the primal problem as \text{p-glasso} and \text{dp-glasso} in \cite{mazumder2012graphical}. These algorithms take a coordinate descent step for each row and corresponding column and iterate cyclically until convergence. These algorithms are shown to converge to the optimal primal or dual variable. However, the rate of convergence of these methods have not been established, thus providing no theoretical guarantees for the speed of convergence.

 
\subsection*{Proximal Gradient and Proximal Newton-like Methods}
In \cite{d2008first}, \cite{lu2009smooth}, and \cite{lu2010adaptive} the authors use Nesterov's smooth approximation methods and its variations to propose algorithms which achieve $\epsilon$-convergence in ${\cal O}(1/\epsilon)$ and ${\cal O}(1/\sqrt\epsilon)$ respectively. A proximal method, \text{QUIC}, which uses a proximal Newton step was proposed in \cite{hsieh2011sparse}. The QUIC algorithm is guaranteed to converge Q-quadratically provided the iterates are close enough to the optimal. This algorithm was later generalized in \cite{lee2012proximal}. In recent work \cite{brajarat2012nips}, a proximal gradient algorithm, \text{G-ISTA}, was proposed and shown to achieve global linear convergence. Both QUIC and G-ISTA algorithms have their respective advantages in various sparsity and condition number regimes. Moreover, both \text{G-ISTA} and \text{QUIC} use the inverse covariance matrix as their operating variable and seem to outperform the coordinate descent glasso methods in numerical examples.

\section{\text{G-AMA}: An alternating minimization algorithm for sparse graphical model estimation}
 
 
\subsection{Methodology and problem formulation}
In this section, we describe our proposed methodology to solve Problem (\ref{mainprob}) using alternating minimization algorithm \cite{tseng1991applications}. We write problem (\ref{mainprob}) is composite form using a dummy variable $Z$, with $h(X) = - \log\det X$, and $g(Z) = \langle S, Z \rangle + \lambda \|Z\|_{1}$ as 
\begin{alignat}{1}
\label{mainamaprob}
\minimize{X,Z} \ & - \log\det X + \langle S, Z \rangle + \lambda \|Z\|_{1}, \nonumber \\
\st \ & Z - X = 0.
\end{alignat}
An \text{AMA} iteration updates the primal variables $X$, $Z$ and the dual variable $Y$ sequentially using 
\begin{alignat}{1}
\label{subproblems}
	X_{+} &= \argmin{X} \,\,\, - \log\det X - \langle Y, -X \rangle, \nonumber \\
	Z_{+} &= \argmin{Z} \,\,\, \langle  S, Z \rangle + \lambda \|Z\|_{1} - \langle Y, Z \rangle + \frac{\tau}{2}\| X_{+} - Z \|_{F}^{2}, \nonumber \\
	Y_{+} &= Y + \tau(X_{+} - Z_{+}),
\end{alignat}
where $\tau$ is the step-size. A key feature of the optimization problems in the first two steps is that they can be solved analytically to obtain closed form expressions. The optimality conditions for these two problems are
\begin{alignat}{1}
0 &= -X_{+}^{-1} + Y, \nonumber \\
0 &\in S + \lambda\mbox{Sign}(Z_{+}) - Y + {\tau}(Z_{+} - X_{+}). \nonumber
\end{alignat}
The second optimality condition can be rewritten using the soft-thresholding function ${\cal S}_{\lambda}(x) = \mbox{sign}(x)\left(\max \{(|x| - \lambda),0 \}\right)$ as 
\begin{alignat}{1}
Z_{+} & = {\cal S}_{\lambda/\tau} (X_{+} + (Y - S)/\tau), \nonumber
\end{alignat}
where the soft-thresholding operator is applied entry-wise. In addition, we observe that substituting $X_{+}$ and $Z_{+}$ in terms of $Y$ yields a one step update for $Y_{+}$ which can be written as
\begin{alignat}{1}
\label{gamaiter}
Y_{+} &= \min \left\{ \max \left\{ Y - S + \tau Y^{-1}, -\lambda \right\}, \lambda \right\} + S.
\end{alignat}
We define a clip function as ${\cal C}_{\lambda}(x) = \min\left\{\max\left\{x, -\lambda\right\}, \lambda\right\}$, where the $\min$, $\max$ functions clip $x$ between $-\lambda$ and $\lambda$. The clip function is related to the soft-thresholding function via the identity 
\begin{alignat}{1}
x &= {\cal S}_{\lambda}(x) + {\cal C}_{\lambda}(x). \nonumber
\end{alignat} 
This identity is used in the next section for proof of convergence. The details for alternating minimization algorithm for sparse graphical model estimation (\text{G-AMA}) are given in Algorithm \ref{a:gama}.
\begin{algorithm}
\caption{\text{G-AMA}: Graphical - Alternating Minimization Algorithm}
\label{a:gama}
\begin{tabbing}
\enspace \text{input}: sample covariance $S$, regularization $\lambda$, tolerance $\epsilon_{\text{opt}}$, backtracking constant $c \in (0,1)$. \\
\enspace \text{initialize}: $Y_{0}$, $\tau_{0,0} = 1$, $\Delta_{\text{opt}} = 2\epsilon_{\text{opt}}$. \\
\enspace \text{while}: {$\Delta_{\text{opt}} > \epsilon_{\text{opt}}$}, \\
	\qquad $X_{k+1} = Y_{k}^{-1}$,\\
	\qquad \text{compute} $\tau_{k}$: Largest feasible step in $\{c^{j}\tau_{k,0}\}_{j=0,1,\ldots}$ such that $Y_{k+1} \succ 0$ and satisfies (\ref{suffdesc}),\\
	\qquad $Z_{k+1} = {\cal S}_{\lambda/\tau_{k}}(X_{k+1} + (Y_{k}-S)/\tau_{k}))$,\\
	\qquad $Y_{k+1} = Y_{k} + \tau_{k}(X_{k+1} - Z_{k+1})$,\\
	\qquad $\tau_{k+1,0} = \displaystyle\frac{\la Y_{k+1}-Y_{k}, Y_{k+1}-Y_{k} \ra}{\la Y_{k+1}-Y_{k}, X_{k}-X_{k+1} \ra}$,\\
	\qquad $\Delta_{\text{opt}} = -\log\det Y_{k+1} - p - \log\det X_{k+1} + \langle S, X_{k+1} \rangle + \lambda\|X_{k+1}\|_{1}$,\\
\enspace \text{endwhile}\\
\enspace \text{output}: $\epsilon$-optimal solution to problem (\ref{mainprob}), covariance estimate $\hat{\Sigma}^{(\lambda)} = Y_{k+1}$.\\
\end{tabbing}
\end{algorithm}

The Algorithm \ref{a:gama} is terminated using an $\epsilon_{\texttt{opt}}$ tolerance on the duality gap. The step size for each iteration is chosen using the Barzilia-Borwein (BB) step (see \ref{stepsize} in Supplemental section) which is a two point approximation to the secant equation \cite{barzilai1988two}. A backtracking line search is conducted such that the next iterate is positive definite and satisfies a sufficient descent given by 
\begin{alignat}{1}
\label{suffdesc}
-\log\det(Y_{k+1}) \leq -\log\det(Y_{k}) + \la Y_{k+1}-Y_{k}, Y_{k}^{-1} \ra + \frac{1}{2\tau}\|Y_{k+1}-Y_{k}\|^{2}_{F},
\end{alignat}
where the right hand side is a local quadratic approximation of the dual objective around $Y_{k}$. In case a number of backtracking steps fail to satisfy these two criteria, a safe step of $\tau = \alpha^{2}$ is taken based on Lemma \ref{lowerbound}. More details about step size selection are given in the Supplemental section.

For \text{G-AMA}, the objective function of the dual problem (\ref{maindual}) can be written as $h^{*}(Y) + g^{*}(Y)$ with
\begin{alignat}{1}
\label{dualobjectives}
h^{*}(Y) = -\log\det Y, \ \ \ \ \mbox{and} \ \ \ \ g^{*}(Y) = \bm{1}_{\{\|Y-S\|_{\infty} \leq \lambda\}}.
\end{alignat}
The gradient and hessian of the smooth dual objective $h^{*}$ are given by
\begin{alignat}{1}
\nabla{h}^{*}(Y) = -Y^{-1}, \nonumber
\end{alignat} 
and
\begin{alignat}{1}
\nabla^{2}{h}^{*}(Y) = -Y^{-1} \otimes Y^{-1}. \nonumber
\end{alignat} 
In the following section, we show that over any compact subset of ${\cal S}_{++}^{p}$ the function $h^{*}$ is strongly convex and the gradient $\nabla{h^*}$ is Lipschitz continuous. These properties are useful in establishing global linear convergence of \text{G-AMA} using properties of \text{AMA} from \cite{tseng1991applications}.  

 
\section{Convergence analysis and rate of convergence}
In this section, we prove theoretical results regarding global convergence of \text{G-AMA}. We first show strong convexity and Lipschitz continuity of the gradient of the objective function over any compact domain. Next we show that the iterates of \text{G-AMA} belong to a compact domain bounded away from the boundary of the positive definite cone $S_{++}^{p}$. Finally we show the linear convergence of the \text{G-AMA} iterates. The proofs are given in the Supplementary section.

We begin with the optimality conditions for Problem (\ref{mainprob}). The subgradient condition gives
\begin{alignat}{1}
- X_{*}^{-1} + S - \lambda \mbox{Sign}(X_{*}) \ni 0. \nonumber
\end{alignat}
This translates to 
\begin{alignat}{1}
\label{optcond}
X_{*}(i,j) = 0 \iff & |S(i,j) - X_{*}^{-1}(i,j) | \leq \lambda \,\,\,\, \mbox{and} \nonumber\\
X_{*}(i,j) \neq 0 \iff & S(i,j) - X_{*}^{-1}(i,j)  = -\lambda\cdot\mbox{Sign}\{X_{*}(i,j)\}. 
\end{alignat}
We now show that the optimal point satisfying the above conditions is a fixed point for the \text{G-AMA} iterations and vice versa.
\begin{lemma}
\label{fixedpointlemma}
A matrix $X_{*}$ is the optimal solution of the Problem (\ref{mainprob}) satisfying (\ref{optcond}) if and only if the inverse $Y_{*} = X_{*}^{-1}$ is a fixed point of the \text{G-AMA} iteration in (\ref{gamaiter}), i.e.,
\begin{alignat}{1}
\label{fixedpoint}
Y_{*} &= {\cal C}_{\lambda}\left(\tau Y_{*}^{-1} + (Y_{*} - S) \right) + S.
\end{alignat}
\end{lemma}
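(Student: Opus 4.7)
The plan is to use the clip/soft-threshold identity $x = \mathcal{S}_\lambda(x) + \mathcal{C}_\lambda(x)$ (applied entry-wise) to rewrite the fixed point equation (\ref{fixedpoint}) in a form that matches the subgradient optimality conditions (\ref{optcond}) line by line.

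First, let $Y_* = X_*^{-1}$ and apply the above identity to $x = \tau Y_*^{-1} + (Y_* - S) = \tau X_* + (Y_* - S)$. Subtracting $S$ from both sides of (\ref{fixedpoint}), the fixed point equation becomes
\begin{alignat}{1}
Y_* - S = \mathcal{C}_\lambda\bigl(\tau X_* + (Y_* - S)\bigr) = \tau X_* + (Y_* - S) - \mathcal{S}_\lambda\bigl(\tau X_* + (Y_* - S)\bigr), \nonumber
\end{alignat}
which simplifies to the single compact equivalent
\begin{alignat}{1}
\tau X_* = \mathcal{S}_\lambda\bigl(\tau X_* + (X_*^{-1} - S)\bigr). \nonumber
\end{alignat}
So it is enough to show this entry-wise identity is equivalent to (\ref{optcond}) whenever $X_* \succ 0$ (which is forced by $Y_* = X_*^{-1}$ being finite and positive definite).

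Next I would verify the two directions by a case split on each entry $(i,j)$. If $X_*(i,j) = 0$, the right-hand side forces $\mathcal{S}_\lambda((X_*^{-1} - S)(i,j)) = 0$, which is equivalent to $|S(i,j) - X_*^{-1}(i,j)| \le \lambda$, matching the first line of (\ref{optcond}). If $X_*(i,j) \ne 0$, then since soft-thresholding preserves sign, the identity $\tau X_*(i,j) = \mathcal{S}_\lambda(\tau X_*(i,j) + (X_*^{-1} - S)(i,j))$ forces the inner argument to lie outside $[-\lambda,\lambda]$ on the same side as $X_*(i,j)$, and cancelling $\tau X_*(i,j)$ from both sides gives $(X_*^{-1} - S)(i,j) = \lambda\cdot \mbox{Sign}\{X_*(i,j)\}$, i.e. the second line of (\ref{optcond}). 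The converse direction is the same computation read backwards.

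The only subtlety worth being careful about is the sign/threshold bookkeeping in the second case: one needs to check that the value of the argument to $\mathcal{S}_\lambda$ automatically lands in the correct region (so that the soft-threshold does not kill the $\tau X_*$ contribution), and that using $\tau > 0$ (which holds by construction in Algorithm \ref{a:gama}) is essential to preserve the sign. Everything else is routine algebra, and the equivalence then follows coordinate-wise.
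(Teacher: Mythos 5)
Your proposal is correct and in substance the same as the paper's proof: both reduce to an entry-wise case analysis on the sign of $X_*(i,j)$, with the paper checking the three cases directly on the clip function and arguing the converse from the fact that the extra $\tau X_*$ term in the clip argument forces the stated sign conditions. Your preliminary reduction via the identity $x = {\cal S}_{\lambda}(x) + {\cal C}_{\lambda}(x)$ to the single equation $\tau X_* = {\cal S}_{\lambda}\bigl(\tau X_* + (X_*^{-1}-S)\bigr)$ is a slightly cleaner packaging that handles both directions of the equivalence at once, but it is not a different argument.
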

\begin{proof}
The proof is given in Appendix.
\end{proof}
The existence of fixed point (\ref{fixedpoint}) will allow us to exploit arguments similar to those in \cite{brajarat2012nips} to prove global linear convergence. 

Now, note that the gradient $\nabla\log\det Y = Y^{-1}$ is not Lipschitz continuous over the entire domain of $\bm{S}_{++}^{p}$. However, the gradient can be shown to be Lipschitz continuous over the compact domain  
\begin{alignat}{1}
\label{compact}
{\cal D} &= \left\{ Y \mid \alpha I \preceq Y \preceq \beta I \right\} \subset \bm{S}_{++}^{p}, \ \ \ \mbox{for} \ \ \ 0 < \alpha < \beta < \infty.
\end{alignat} 
\begin{lemma}
\label{lipschitz}
(\cite[Lemma 2]{brajarat2012nips}). For $Y_{1}, Y_{2} \in \bm{S}_{++}^{p}$, the gradient $\nabla\log\det Y = Y^{-1}$ satisfies 
\begin{alignat}{1}
\displaystyle\frac{1}{\beta^{2}} \| Y_{1} - Y_{2} \|_{2} \leq \| Y_{1}^{-1} - Y_{2}^{-1} \|_{2} \leq \frac{1}{\alpha^{2}} \| Y_{1} - Y_{2} \|_{2}, \nonumber
\end{alignat}
where $\alpha = \min\left\{ \lambda_{\min}(Y_{1}), \lambda_{\min}(Y_{2}) \right\}$ and $\beta = \max\left\{ \lambda_{\max}(Y_{1}), \lambda_{\max}(Y_{2}) \right\}$.
\end{lemma}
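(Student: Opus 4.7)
The plan is to reduce both inequalities to the algebraic identity
$$Y_1^{-1} - Y_2^{-1} = Y_1^{-1}(Y_2 - Y_1)Y_2^{-1},$$
combined with submultiplicativity of the spectral norm $\|ABC\|_2 \le \|A\|_2\|B\|_2\|C\|_2$ and the fact that for $Y \in \bm{S}_{++}^p$ one has $\|Y\|_2 = \lambda_{\max}(Y)$ and $\|Y^{-1}\|_2 = 1/\lambda_{\min}(Y)$. Both halves of the lemma then drop out of a one-line calculation, with the only difference being which side of the identity we bound.

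For the upper bound, I would apply submultiplicativity directly to the identity above to obtain
$$\|Y_1^{-1} - Y_2^{-1}\|_2 \le \|Y_1^{-1}\|_2 \cdot \|Y_1 - Y_2\|_2 \cdot \|Y_2^{-1}\|_2 = \frac{\|Y_1-Y_2\|_2}{\lambda_{\min}(Y_1)\,\lambda_{\min}(Y_2)}.$$
Since $\lambda_{\min}(Y_i) \ge \alpha$ for $i=1,2$ by definition of $\alpha$, the denominator is at least $\alpha^2$, giving the right-hand inequality.

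For the lower bound, I would sandwich in the reverse direction: left-multiplying by $Y_1$ and right-multiplying by $Y_2$ collapses $Y_1(Y_1^{-1} - Y_2^{-1})Y_2 = Y_2 - Y_1$. Taking norms and applying submultiplicativity once more,
$$\|Y_1 - Y_2\|_2 \le \|Y_1\|_2 \cdot \|Y_1^{-1} - Y_2^{-1}\|_2 \cdot \|Y_2\|_2 \le \beta^2\, \|Y_1^{-1} - Y_2^{-1}\|_2,$$
where the last step uses $\|Y_i\|_2 = \lambda_{\max}(Y_i) \le \beta$. Dividing through by $\beta^2$ yields the left-hand inequality.

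There is no real obstacle here; the lemma is essentially the statement that inversion, viewed as a map on $\bm{S}_{++}^p$, is bi-Lipschitz on any spectrally bounded set, and the proof is three lines of linear algebra. The only details I would double-check are that (i) the spectral-norm identities used rely on the matrices being symmetric positive definite, which is given, and (ii) the bounds $\alpha$ and $\beta$ are applied to the correct extremal eigenvalues on both sides so that the inequalities go in the direction claimed.
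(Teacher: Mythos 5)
Your proof is correct. Note, however, that the paper does not actually prove this lemma itself --- it imports it verbatim as Lemma 2 of \cite{brajarat2012nips} --- and the technique it uses for the closely related contraction bound (Lemma \ref{strongconvexlemma} in the appendix) is genuinely different from yours: there the map $Y \mapsto \mathrm{vec}(Y + \tau Y^{-1})$ is handled via the mean value theorem, by computing the Jacobian $I_{p^{2}} - \tau Y^{-1} \otimes Y^{-1}$ and bounding its spectral norm along the segment $Y_{\delta} = \delta Y_{1} + (1-\delta)Y_{2}$ using Weyl's inequality; applied to $Y \mapsto \mathrm{vec}(Y^{-1})$ alone, that same calculus argument yields exactly the two-sided bound stated here. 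Your route --- the resolvent identity $Y_{1}^{-1}-Y_{2}^{-1} = Y_{1}^{-1}(Y_{2}-Y_{1})Y_{2}^{-1}$ together with submultiplicativity of the spectral norm --- is more elementary (no vectorization, Kronecker products, or differentiability needed) and in fact delivers the slightly sharper constants $1/\bigl(\lambda_{\min}(Y_{1})\lambda_{\min}(Y_{2})\bigr)$ and $1/\bigl(\lambda_{\max}(Y_{1})\lambda_{\max}(Y_{2})\bigr)$, of which the stated $1/\alpha^{2}$ and $1/\beta^{2}$ are weakenings. What the mean-value-theorem machinery buys, and what your identity does not directly provide, is the contraction factor $\max\left\{\left|1-\tau/\alpha^{2}\right|, \left|1-\tau/\beta^{2}\right|\right\}$ for the combined map $Y + \tau Y^{-1}$ needed in Theorem \ref{t:gama}: there the identity and inverse parts must be combined spectrally rather than bounded separately, which is why the paper organizes everything around the Kronecker-product Jacobian rather than around your algebraic identity.
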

We also note that the hessian of $\log\det$ function is given by
\begin{alignat}{1}
\label{strongconvex}
\nabla^{2}\log\det Y = -Y^{-1} \otimes Y^{-1},
\end{alignat} 
and the function $\log\det Y$ is strongly convex when $Y$ is restricted to the domain ${\cal D}$. Next we show that the iterates of \text{G-AMA} belong to the bounded compact set ${\cal D}$ defined in (\ref{compact}) and give explicit values for the constants $\alpha$ and $\beta$. We begin with the upper bound based on \cite[Lemma 8]{brajarat2012nips}.
\begin{lemma}
\label{upperbound}
(\cite[Lemma 8]{brajarat2012nips}). Let $0 \prec \alpha_{l} I \preceq Y_{l}$ for $l = 0, 1 \ldots k$ and let $\tau_{k} < \alpha_{k}^{2} < \beta^{2}$ (for $\beta =  \|Y_{0}-Y_{*}\|_{F} + \|Y_{*}\|_{2}$) be the step size used for \text{G-AMA} iterations starting with $\alpha_{0} I \prec Y_{0}$. Then the next iterates $Y_{k+1}$ satisfies
\begin{alignat}{1}
\label{betadef}
Y_{k+1} &\preceq \beta I,
\end{alignat}
and hence by inductive argument $Y_{k} \preceq \beta I$, for all $k$.
\end{lemma}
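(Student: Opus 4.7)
The strategy is to establish Fejer monotonicity of the G-AMA iterates with respect to the optimum $Y_*$ in the Frobenius norm, and then pass to the spectral-norm bound via the triangle inequality. Concretely, the plan is to show $\|Y_{l+1} - Y_*\|_F \le \|Y_l - Y_*\|_F$ for every $l \le k$, which by induction gives $\|Y_{k+1} - Y_*\|_F \le \|Y_0 - Y_*\|_F$. Combining this with $\|A\|_2 \le \|A\|_F$ then yields $\|Y_{k+1}\|_2 \le \|Y_0 - Y_*\|_F + \|Y_*\|_2 = \beta$, which is precisely $Y_{k+1} \preceq \beta I$.

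For the contraction step I would exploit Lemma \ref{fixedpointlemma}, which tells us that for any positive step size the optimum satisfies $Y_* = {\cal C}_\lambda(Y_* - S + \tau Y_*^{-1}) + S$. Subtracting this identity (with $\tau = \tau_k$) from the G-AMA update (\ref{gamaiter}) and using the fact that the entrywise clip ${\cal C}_\lambda$ is 1-Lipschitz, and hence non-expansive in the Frobenius norm, gives
\[
\|Y_{k+1} - Y_*\|_F \;\le\; \|(Y_k - Y_*) + \tau_k (Y_k^{-1} - Y_*^{-1})\|_F.
\]
Expanding the squared right-hand side and applying Baillon-Haddad co-coercivity to the convex map $Y \mapsto -\log\det Y$ (whose gradient $-Y^{-1}$ is, by Lemma \ref{lipschitz}, Lipschitz with constant $1/\alpha_k^2$ on the set $\{Y \succeq \alpha_k I\}$) yields
\[
\langle Y_k - Y_*,\, Y_k^{-1} - Y_*^{-1} \rangle \;\le\; -\alpha_k^2\,\|Y_k^{-1} - Y_*^{-1}\|_F^2,
\]
and therefore
\[
\|(Y_k - Y_*) + \tau_k (Y_k^{-1} - Y_*^{-1})\|_F^2 \;\le\; \|Y_k - Y_*\|_F^2 + \tau_k(\tau_k - 2\alpha_k^2)\,\|Y_k^{-1} - Y_*^{-1}\|_F^2.
\]
The hypothesis $\tau_k < \alpha_k^2$ makes the second term non-positive and delivers the desired Fejer inequality.

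The main obstacle is invoking co-coercivity in a setting where $-\log\det$ is not globally Lipschitz on $\bm{S}_{++}^p$: one must restrict to a compact subset containing $Y_k$, $Y_*$, and the segment joining them, whose minimum eigenvalue controls the Lipschitz constant. The assumptions of the lemma are tailored exactly to this need: the bound $\alpha_l I \preceq Y_l$ at each prior iterate together with the step-size rule $\tau_l < \alpha_l^2$ pin down a compact domain on which co-coercivity applies with the correct constant, so the non-expansiveness estimate propagates cleanly through the induction from $l = 0$ to $l = k$. Once this is in place, the triangle inequality $\|Y_{k+1}\|_2 \le \|Y_{k+1} - Y_*\|_F + \|Y_*\|_2$ closes the argument.
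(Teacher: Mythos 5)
Your proof is correct, and its outer skeleton is exactly the paper's: establish $\|Y_{l+1}-Y_*\|_F \le \|Y_l - Y_*\|_F$ for each $l\le k$, chain these to get $\|Y_{k+1}-Y_*\|_F \le \|Y_0-Y_*\|_F$, and finish with $\lambda_{\max}(Y_{k+1}) = \|Y_{k+1}\|_2 \le \|Y_{k+1}-Y_*\|_2 + \|Y_*\|_2 \le \|Y_{k+1}-Y_*\|_F + \|Y_*\|_2$. Where you genuinely diverge is in how the non-expansiveness of the forward map $Y \mapsto Y + \tau Y^{-1}$ is obtained after the clip/projection step is stripped off. The paper (in its Lemma \ref{strongconvexlemma}) applies the mean value theorem to $h_\tau(Y)=\vect(Y)+\tau\vect(Y^{-1})$, bounds the Jacobian $I_{p^2}-\tau Y_\delta^{-1}\otimes Y_\delta^{-1}$ along the segment via Weyl's inequality, and obtains the explicit two-sided factor $\max\left\{\left|1-\tau/\alpha^2\right|,\left|1-\tau/\beta^2\right|\right\}$; you instead expand the square and invoke Baillon--Haddad co-coercivity of $\nabla(-\log\det)$ with modulus $\alpha_k^2$, getting $\|Y_k-Y_*\|_F^2 + \tau_k(\tau_k-2\alpha_k^2)\|Y_k^{-1}-Y_*^{-1}\|_F^2$. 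Both are standard and valid; your route is the classical forward--backward argument and is arguably cleaner for this lemma, while the paper's mean-value route buys a quantitative contraction factor strictly below one that is reused verbatim in Theorem \ref{t:gama} to prove linear convergence --- your co-coercivity bound as written only delivers non-expansiveness, which suffices here but would need to be supplemented (e.g.\ by strong convexity on the compact set) to recover the rate. One small caveat, which you flag and which the paper's own statement shares: the co-coercivity (resp.\ Lipschitz) constant must be taken over a convex set containing both $Y_k$ and $Y_*$, so $\alpha_k$ implicitly has to lower-bound $\lambda_{\min}(Y_*)$ as well, exactly as in the definition of $\alpha$ in Lemma \ref{strongconvexlemma}.
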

\begin{proof}
The proof is given in Appendix.
\end{proof}
Note that Lemma \ref{upperbound} only assumes a strictly positive $\alpha_{l}$ for $l = 0, 1,\ldots,k$ to prove the upper bound on $Y_{k+1}$. It does not assume that the subsequent iterates are strictly bounded away from the semidefinite boundary. This result is shown in the next lemma which will use a universal upper bound $\beta$ on $\|Y_{k}\|_{2}$, for all $(k = 0, 1, \ldots)$. Based on Lemma 3 in \cite{hsieh2011sparse} we show that the iterates of \text{G-AMA} are bounded away from the boundary of the positive definite cone and satisfy $\alpha I \preceq Y_{k}$ for some universal $\alpha > 0$ for all $(k = 0, 1, \ldots)$.
\begin{lemma}
\label{lowerbound}
(\cite[Lemma 3]{hsieh2011sparse}). The set ${\cal U} = \left\{ Y \in \bm{S}_{++}^{p} \mid Y\prec \beta I \,\,\, \mbox{and} \,\,\, \log\det Y_{0} < \log\det Y \right\}$ satisfies $\alpha I \preceq Y$ with $\alpha = \frac{\lambda^{p}}{\beta^{(p-1)}}$, where $\beta$ is the upper bound from (\ref{betadef}) of Lemma \ref{upperbound}.
\end{lemma}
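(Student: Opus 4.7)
The plan is to exploit the eigenvalue decomposition of $Y$ together with the two conditions defining ${\cal U}$: the spectral upper bound $Y \prec \beta I$, and the level-set condition $\log\det Y > \log\det Y_0$. Since $-\log\det$ is a Schur-concave function of the eigenvalues, an upper bound on all but one eigenvalue combined with a lower bound on the product of all eigenvalues will force a strict positive lower bound on the remaining one. This is exactly the setup here: $\beta$ dominates $p-1$ of the eigenvalues, and $\det Y_0$ anchors the product from below.

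Concretely, I would write the eigenvalues of $Y$ as $0 < \mu_p \le \mu_{p-1} \le \cdots \le \mu_1$. The hypothesis $Y \prec \beta I$ gives $\mu_i < \beta$ for every $i$, hence $\log \mu_i < \log \beta$ for $i = 1, \ldots, p-1$. Writing $\log\det Y = \sum_{i=1}^p \log \mu_i > \log\det Y_0$ and isolating the smallest eigenvalue yields
\begin{alignat}{1}
\log \mu_p &> \log\det Y_0 - \sum_{i=1}^{p-1} \log \mu_i \;\ge\; \log\det Y_0 - (p-1)\log \beta. \nonumber
\end{alignat}
Exponentiating gives the clean bound $\mu_p > \det(Y_0) / \beta^{p-1}$. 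Since $\mu_p = \lambda_{\min}(Y)$, this is precisely the statement $\lambda_{\min}(Y) \ge \det(Y_0)/\beta^{p-1}$, and the desired inequality $\alpha I \preceq Y$ with $\alpha = \lambda^p/\beta^{p-1}$ follows the moment one knows $\det Y_0 \ge \lambda^p$.

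The main obstacle is therefore justifying $\det Y_0 \ge \lambda^p$, which is not an intrinsic property of the level set but rather a condition on the chosen initialization. I would handle this by verifying that the initialization prescribed for G-AMA (a dual-feasible $Y_0$ whose eigenvalues are all at least $\lambda$ — for instance a diagonal choice along the lines of $Y_0 = \operatorname{diag}(S) + \lambda I$ when $S \succeq 0$, or more generally any $Y_0$ with $Y_0 \succeq \lambda I$) automatically satisfies $\det Y_0 \ge \lambda^p$ by the usual product-of-eigenvalues argument. Combining this with the purely spectral inequality derived above then yields the universal lower bound $\alpha$, completing the proof and furnishing the compact set ${\cal D} = \{\alpha I \preceq Y \preceq \beta I\}$ needed in Lemma \ref{lipschitz} and the strong-convexity estimate (\ref{strongconvex}) for the subsequent global linear convergence analysis.
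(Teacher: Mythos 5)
Your proposal is correct and follows essentially the same route as the paper's own proof: bound the $p-1$ largest eigenvalues by $\beta$, use the level-set condition $\log\det Y > \log\det Y_0$ to isolate $\lambda_{\min}(Y) > \det(Y_0)/\beta^{p-1}$, and then invoke $\det Y_0 \ge \lambda^p$ from the initialization $Y_0 = S + \lambda I \succeq \lambda I$. Your write-up is in fact slightly cleaner than the paper's (which contains minor typos such as $(\det Y_0)^p$ where $\det Y_0$ is meant), and your explicit flagging of the dependence on the initialization is the right emphasis.
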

\begin{proof}
The proof is given in Appendix.
\end{proof}

We now show global linear convergence of \text{G-AMA} using Lemma \ref{upperbound}, and Lemma \ref{lowerbound}.
\begin{theorem}
\label{t:gama}
The iterates $Y_{k}$ of Algorithm \ref{a:gama} satisfy $\alpha I \preceq Y_{k} \preceq \beta I$ as well as 
\vspace{-0.02in}\begin{alignat}{1}
\label{mt1}
\|Y_{k+1} - Y_{*}\|_{F} \leq \displaystyle\max\left\{\bigg|1-\frac{\tau_k}{\alpha^{2}}\bigg|, \bigg|1-\frac{\tau_k}{\beta^{2}}\bigg|\right\} \|Y_{k} - Y_{*}\|_{F}, \vspace{-0.02in}
\end{alignat}
\vspace{-0.02in} for $k = 0, 1 \ldots$. More specifically, for step size $\tau_{k} < \alpha^{2}$,
\begin{alignat}{1}
\label{mt2}
\|Y_{k+1} - Y_{*}\|_{F} \leq \gamma \|Y_{k} - Y_{*}\|_{F},
\end{alignat}
where $\gamma < 1$ and hence the iterates converges to a $\epsilon$-optimal solution $Y_{*}$ in ${\cal O}\left(\log(1/\epsilon)\right)$ iterations.
\end{theorem}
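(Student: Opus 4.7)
The plan is to combine the fixed-point characterization of Lemma \ref{fixedpointlemma} with the spectral bounds of Lemmas \ref{upperbound}--\ref{lowerbound} to reduce the analysis to a standard contraction argument for a gradient step on a strongly convex, gradient-Lipschitz function, applied after a nonexpansive clipping map.

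First, I would verify by induction on $k$ that every iterate lies in the compact set $\mathcal{D} = \{Y : \alpha I \preceq Y \preceq \beta I\}$ with $\beta = \|Y_0 - Y_*\|_F + \|Y_*\|_2$ and $\alpha = \lambda^p/\beta^{p-1}$. The backtracking rule of Algorithm \ref{a:gama} forces each accepted step to obey $\tau_k < \alpha_k^2$ (the safe step guaranteed by Lemma \ref{lowerbound}), so Lemma \ref{upperbound} applies at every iteration and yields $Y_{k+1} \preceq \beta I$. The sufficient-descent inequality (\ref{suffdesc}) ensures that the dual objective $-\log\det Y$ is nonincreasing along the iterates, hence $\log\det Y_k \geq \log\det Y_0$ for all $k$. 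Consequently each $Y_k$ lies in the sublevel set of Lemma \ref{lowerbound} and therefore satisfies $\alpha I \preceq Y_k$ with a uniform $\alpha$, closing the induction.

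Next, I would rewrite the iteration (\ref{gamaiter}) as $Y_{k+1} = \mathcal{C}_\lambda(Y_k - S + \tau_k Y_k^{-1}) + S$, and use Lemma \ref{fixedpointlemma} to express the optimum in the same form $Y_* = \mathcal{C}_\lambda(Y_* - S + \tau_k Y_*^{-1}) + S$. Subtracting and noting that $\mathcal{C}_\lambda$ is nonexpansive in the Frobenius norm (it is the entrywise projection onto $[-\lambda,\lambda]$, hence a coordinatewise 1-Lipschitz map), one gets
\[
\|Y_{k+1} - Y_*\|_F \;\leq\; \bigl\|(Y_k - Y_*) + \tau_k(Y_k^{-1} - Y_*^{-1})\bigr\|_F.
\]
Since $\nabla h^*(Y) = -Y^{-1}$, the right-hand side equals $\|(Y_k - Y_*) - \tau_k(\nabla h^*(Y_k) - \nabla h^*(Y_*))\|_F$, a standard gradient-step expression.

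Finally, I would express the gradient difference as $\int_0^1 \nabla^2 h^*(Y_t)\,[Y_k - Y_*]\,dt$ with $Y_t = Y_* + t(Y_k - Y_*) \in \mathcal{D}$, so that the displayed quantity becomes $\|(I - \tau_k H)(Y_k - Y_*)\|_F$ for the symmetric averaged operator $H = \int_0^1 Y_t^{-1} \otimes Y_t^{-1}\,dt$. By Step 1 and (\ref{strongconvex}), every $\nabla^2 h^*(Y_t)$ has eigenvalues in $[1/\beta^2, 1/\alpha^2]$, hence so does $H$, giving $\|I - \tau_k H\|_{\mathrm{op}} \leq \max\{|1 - \tau_k/\alpha^2|,\,|1-\tau_k/\beta^2|\}$ and therefore (\ref{mt1}). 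For $\tau_k < \alpha^2 < \beta^2$, both $1-\tau_k/\alpha^2$ and $1-\tau_k/\beta^2$ lie in $(0,1)$, so their maximum $\gamma < 1$; iterating (\ref{mt2}) yields the advertised $\mathcal{O}(\log(1/\epsilon))$ complexity. The main obstacle is carefully justifying Step 1 for the backtracked step sizes so that $\alpha$ and $\beta$ can be chosen independently of $k$; once this uniform enclosure in $\mathcal{D}$ is in hand, the contraction in Step 3 reduces to invoking the spectral bounds on the log-determinant Hessian.
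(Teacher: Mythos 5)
Your proposal is correct and follows essentially the same route as the paper: confine the iterates to $\{Y : \alpha I \preceq Y \preceq \beta I\}$ via Lemmas \ref{upperbound} and \ref{lowerbound}, peel off the clip operator by nonexpansiveness, and bound $Y \mapsto Y + \tau Y^{-1}$ by a mean-value argument whose Jacobian $I_{p^2} - \tau Y^{-1}\otimes Y^{-1}$ has spectrum in $[1-\tau/\alpha^2,\,1-\tau/\beta^2]$ (the paper's Lemma \ref{strongconvexlemma} uses the supremum form of the mean value inequality where you use the integral form, an immaterial difference). Your explicit monotone-descent step justifying membership in the sublevel set $\mathcal{U}$ of Lemma \ref{lowerbound} is a welcome detail that the paper's proof leaves implicit.
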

\begin{proof}
The proof is given in Appendix.
\end{proof}

The maximum step size $\alpha^2$ provided by $\alpha$ from Lemma \ref{lowerbound} is very conservative. We show later that much better performance can be achieved in practice using heuristics for choosing the step size.

 
\subsection{Connections to \text{G-ISTA}} 
The dual objective functions $h^{*}$ and $g^{*}$ from (\ref{dualobjectives}) are conjugate functions (see \cite{boyd2004convex}) of the primal objective functions 
\begin{alignat}{1}
h(X) = -\log\det X, \ \ \ \ \mbox{and}  \ \ \ \ g(Z) = \la S, Z \ra + \|Z\|_{1}. \nonumber
\end{alignat} 
In \cite{brajarat2012nips}, Problem (\ref{mainprob}) is solved using a proximal gradient method (\text{FBS} also known as \text{ISTA}) and global linear convergence of the algorithm is established using strong convexity of $h$ and Lipschitz continuity of the gradient, $\nabla{h}$, over a compact domain. Given an initial point, the subsequent iterates are shown to remain in a compact domain and hence the proximal gradient method converges linearly \cite{rockafellar1976monotone}. The two algorithms \text{G-AMA} and \text{G-ISTA} are closely related, and as expected there are some parallels in their respective convergence analysis. The parallels between the two methods can also be used to choose the step size using an equivalent Barzilia-Borwein step.

 
\subsection{Connections to \text{QUIC} and \text{DC-QUIC}} 
An extension wrapper for QUIC called DC-QUIC was proposed in \cite{hsieh2012divide} that provides a divide-and-conquer framework which approximates a large inverse covariance estimation problem with multiple small problems. These problems are then solved using standard QUIC and the solutions are concatenated to get an approximate solution of the original problem. More specifically, the DC wrapper is an independent top layer that only uses QUIC as its base algorithm. Any equivalent method like G-AMA can replace QUIC in this bigger divide-and-conquer framework. In this light, our current work compares the base algorithms of G-AMA and QUIC, and the speedup of G-AMA over QUIC can therefore be also directly utilized in the spirit of a ``DC-GAMA". 

\subsection{Connections to \text{ALM} \cite{scheinberg2010sparse}} 
We also note that the G-AMA algorithm is different from the ALM method \cite{scheinberg2010sparse} in several fundamental aspects, including (a) Theoretical, (b) Algorithmic and, (c) Numerical. In particular, (a) The ALM algorithm is shown to attain an $\epsilon$-optimal solution in O(1/$\epsilon$) iterations. In comparison, G-AMA is shown to attain an $\epsilon$-optimal solution in O($\log$(1/$\epsilon$)) iterations. The rate of convergence of our proposed method G-AMA is therefore faster, and it is not faster just by an order of magnitude (such as O(1/$\epsilon^2$) of Nesterov's accelarated gradient methods), it is in a different regime altogether. In particular, G-AMA has linear convergence. (b) The ALM algorithm is an implementation of ADMM \cite{boyd2011distributed}, whereas G-AMA is an implementation of AMA \cite{tseng1991applications}. The key difference between these methods is the first subproblem in (\ref{subproblems}) where the ADMM algorithm uses a quadratic term and AMA does not. As we shall see this very vital difference provides the speedup enjoyed by G-AMA as a result of the strongly convex objective function. (c) Numerically, ALM was substantially shown to be slower than QUIC in \cite{hsieh2011sparse} and was therefore not considered state-of-the-art in comparisons with G-AMA. To be fair, G-AMA will be compared to the more superior QUIC and G-ISTA algorithms and will be shown to outperform these current state-of-the-art methods.

\subsection{Choice of penalty parameter and Statistical Consistency}
Once convergence of G-AMA to a global minimum is established, the next natural question to investigate is the choice of penalty parameter $\lambda$. There are various approaches that have been proposed in this regard. First, cross validation is a popular technique that is often used. Second, one could use a value for the penalty parameter $\lambda$ that yields a desired level of sparsity (such as 3\% edge density). Third, Bayesian information type criteria are also popular for selection of the appropriate level of regularization \cite{sang2013concord}. Fourth, the penalty parameter can be chosen so as to control the probability of false detection of edges, i.e., a certain predetermined level of error control \cite{banerjee2008model}. An important question pertains to asymptotic recovery of the underlying graph when the data is generated from a sparse Gaussian graphical model (i.e., model selection consistency). Since G-AMA yields the $\ell_1$ regularized MLE (similar to G-lasso, G-ISTA and other approaches), we can leverage already established statistical theory to assert asymptotic consistency of G-AMA estimates. The reader is referred to \cite{banerjee2008model} in this regard, who show that  provided the dimension $p = {\cal O}(n^{\gamma})$ for some $\gamma > 0$, the $\ell_1$ regularized MLE recovers the underlying graph with probability tending to 1.

\section{Generalized \text{G-AMA}}
\label{s:gengama}
We now proceed to demonstrate that the G-AMA framework is even richer than the formulation in Problem (\ref{mainamaprob}). In this section, we extend our \text{G-AMA} algorithm to solve two natural generalizations of the sparse covariance estimation problem.

The first generalization modifies the dual problem (\ref{maindual}) to include explicit constraints on the bivariate covariances in $\Sigma$. As discussed in Section \ref{s:intro}, scenarios where domain specific knowledge regarding correlation structure is available arise in numerous applications. This knowledge/structure is more easily found in bivariate pairwise/marginal relationships between feature pairs since such quantities can be simply calculated from the sample covariance matrix, as compared to the sample inverse covariance matrix which is not defined when $n < p$. Since the operating variable for \text{G-AMA} is the estimate of the covariance matrix $\hat{\Sigma} = Y$, it can easily incorporate this domain specific knowledge by adding constraints on the covariance matrix (compared to the inverse covariance matrix). Thus the G-AMA framework allows for an additional layer of regularization, while maintaining the sparsity in the inverse covariance matrix, and in the process enriches the type of graphical modeling that can be achieved. We develop a comprehensive methodology in this regard that extends the standard inverse covariance estimation problem. The dual method for \text{glasso} can also be modified to incorporate simple bound constraints like $l_{ij} \leq \sigma_{ij} \leq u_{ij}$. However, a more complicated constraint would lose the fast coordinatewise descent algorithm for the lasso subproblem.

The second generalization modifies the primal problem (\ref{mainprob}) with a $\ell_{1}$ regularization on a symmetric linear transform of the inverse covariance matrix $X$ (see Problem (\ref{mod2prim})). In many applications the underlying graph structure is known to be symmetric, i.e., the edge weights (partial covariance/correlations) assume equal values (equisparsity) either naturally or by design (see \cite{gehrmann2012estimation} and references therein for examples). In such settings, it is often both necessary and useful to estimate a partial correlation graph that respects these symmetries (equisparisty). Such additional constraints on graphical models are both of theoretical and applied interest due to the added layer of regularization and structure it provides. This type of information about the structure of the underlying graph can be incorporated by using a regularization term in the form of a symmetric linear transform.

\subsection{Methodology}
\subsection*{Generalization of convex constraint on $\Sigma$}
\label{ss:gen1}
The constraint set on $\Sigma$ in Problem (\ref{maindual}) can be generalized to any convex constraint ${\cal D}$. In this subsection we will illustrate the extension of \text{G-AMA} for a problem with arbitrary bound constraints on the bivariate covariances. We also show that constraints on the covariance matrix can sometimes be translated directly to constraints on the inverse covariance, hence primal methods can also be somewhat useful in this context. The dual problem can be modified with generalized bound constraints as
\begin{alignat}{1}
\label{moddual}
\minimize{Y} \ & -\log\det Y \nonumber \\
\st \ & \,\,\, l_{ij} \leq Y(i,j) \leq u_{ij}.
\end{alignat}

Here, each bivariate covariance $\sigma_{ij}$ is allowed to vary in the interval $[l_{ij}, u_{ij}]$ which is a generalization of the interval $[s_{ij}-\lambda,s_{ij}+\lambda]$ from the dual of regularized MLE problem (\ref{maindual}). These modified constraints break away from the maximum likelihood framework and allows a different kind of regularization based on domain specific knowledge. The corresponding primal problem formulated in terms of the inverse covariance matrix is given by
\begin{alignat}{1}
\label{modprim}
\minimize{X} \ & - \log\det X  + \langle \bar{S}, X \rangle + \sum_{i,j} \bar{\lambda}_{ij}|X_{ij}| \nonumber \\
\st \ & \,\,\, X \in \bm{S}_{++}^{p}
\end{alignat}
where $\bar{S}$ is a modified sample covariance (midpoints of the intervals) and the third term is a $\ell_{1}$ penalty with more flexible regularization $\bar{\lambda}_{ij}$ (half length of the intervals) for $X_{ij}$. These are related to the lower and upper bounds of the dual formulation as
\begin{alignat}{1}
{\bar{S}}_{ij} = \frac{l_{ij} + u_{ij}}{2}, \,\,\, \mbox{and} \,\,\, {\bar{\lambda}}_{ij} = \frac{u_{ij}-l_{ij}}{2}.
\end{alignat}

Note that unlike the MLE formulation, where $S \succeq 0$ provided a feasible point $S + \lambda I \succ 0$ (for any $\lambda > 0$), the generalized problem is no longer guaranteed to be feasible for arbitrary choice of parameters $l_{ij}$'s and $u_{ij}$'s. 
For this problem, the algorithm update for $Y$ is given by
\begin{alignat}{1}
\label{boundconupdate}
Y_{+} &= \Pi_{\cal D}\left\{ Y + \tau {Y}^{-1} \right\},
\end{alignat}
where $\Pi_{\cal D}$ is the projection onto the convex constraint set ${\cal D} = \left\{ x \ | \ l_{ij} \leq x_{ij} \leq u_{ij}, \forall{i,j} \right\}$ defined by the bound constraints. The simplified projection for the bound constraints defined in (\ref{moddual}) is given by
\begin{alignat}{1}
\{Y_{+}\}_{ij} &= \min\left\{\max\left\{ \left(Y + \tau {Y}^{-1}\right)_{ij}, l_{ij} \right\}, u_{ij}\right\}. \nonumber
\end{alignat}
The methodology for choosing the step size remains the same as for the standard G-AMA algorithm and the linear convergence results hold.

It is important to note that the generalization to any convex constraint set ${\cal D}$, in addition to the bound constraints or the standard constraint set of $\bm{1}_{\{\|Y-S\|_{\infty}\leq \lambda\}}$, might not always be easily translated to the primal problem. This fact underscores the flexibility of the above generalized G-AMA formulation on the dual.
\subsection*{Generalization of $\ell_{1}$ penalty on $\Omega$}
\label{ss:gen2}
Consider a generalized regularization on the $\ell_1$ norm of $A^{*}XB$ instead of the $\ell_1$ norm of $X$. Then a generalized version of problem (\ref{mainprob}) is given by
\begin{alignat}{1}
\label{mod2prim}
\minimize{X} \ & -\log\det X + \la S, X\ra + \lambda \|A^{*}XB\|_{1} \nonumber \\
\st \ & \,\,\, X \succ 0.
\end{alignat}
It can be easily shown that the above formulation enforeces equality (equisparsity or linear constraints) on the elements of the inverse covariance matrix. As in (\ref{mainamaprob}), the problem can be decomposed into composite form using the constraint $A^{*}XB - Z = 0$. Applying the alternating minimization method, yields the following update steps
\begin{alignat}{1}
	X_{+} &= \argmin{X} \,\,\, - \log\det X + \la S, Z \ra - \la Y, -A^{*}XB) \ra, \nonumber \\
	Z_{+} &= \argmin{Z} \,\,\,  \lambda \|Z\|_{1} - \langle Y, Z \rangle + \frac{\tau}{2}\| Z - A^{*}X_{+}B \|_{F}^{2}, \nonumber \\
	Y_{+} &= Y + \tau\left(A^{*}X_{+}B - Z_{+}\right).
\end{alignat}
The optimality conditions for the first two steps are given by
\begin{alignat}{1}
0 &= -X_{+}^{-1} + S + BYA^{*}, \nonumber \\
0 &\in \lambda\mbox{sign}(Z_{+}) - Y + {\tau}\left(Z_{+} - A^{*}X_{+}B\right).
\end{alignat}
The second optimality condition can be rewritten using the soft-threshold function as 
\begin{alignat}{1}
Z_{+} & = {\cal S}_{\lambda/\tau} \left(A^{*}X_{+}B + Y/\tau\right).
\end{alignat}
Substituting the two optimality conditions for $X_{+}$ and $Z_{+}$ gives a direct update for $Y_{+}$ in terms of $Y$ as
\begin{alignat}{1}
\label{mod2iter}
Y_{+} &= Y + \tau\left[A^{*}X_{+}B - {\cal S}_{\lambda/\tau} \left(A^{*}X_{+}B + Y/\tau\right) \right] \nonumber \\
      &= {\cal C}_{\lambda} \left[Y + \tau A^{*}(S + BYA^{*})^{-1}B \right].
\end{alignat}

\subsection{Convergence Analysis}
The proof of convergence from the previous section can be generalized to apply to the modified \text{G-AMA} described in Section \ref{ss:gen1} and \ref{ss:gen2}. Since the new iterates belong to a subset of the set ${\cal U}$ defined in Lemma \ref{lowerbound}, the same lemma is valid for the new problems. Lemma \ref{lipschitz}, Lemma \ref{upperbound} also hold for the modified method. In order to prove Lemma \ref{strongconvexlemma}, we first recall the non-expansiveness of the $\mbox{prox}$ operator for any convex set $\bar{\cal D}$ \cite{rockafellar1976monotone}.
\begin{lemma}
\label{proxop}
(\cite{rockafellar1976monotone}): The proximal operator $\mbox{prox}_{\tau\bar{\cal D}}$ for an indicator function of a convex set $\bar{\cal D}$ given by
\begin{alignat}{1}
\mbox{prox}_{\tau\bar{\cal D}}(x)  &=  \argmin{y}\left\{ \bm{1}_{\bar{\cal D}} + \frac{\tau}{2} \|y - x\|^{2}\right\} = \Pi_{\bar{\cal D}}(x)
\end{alignat}
satisfies
\begin{alignat}{1}
\|\mbox{prox}_{\tau\bar{\cal D}}(x) - \mbox{prox}_{\tau\bar{\cal D}}(y)\|_{2} \leq  \|x -y\|_{2}
\end{alignat}
\end{lemma}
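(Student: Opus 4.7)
The plan is to prove this as a standard consequence of the variational characterization of the Euclidean projection onto a closed convex set. First I would observe that because $\bm{1}_{\bar{\cal D}}$ takes the value $0$ on $\bar{\cal D}$ and $+\infty$ off it, the minimization in the definition of $\mbox{prox}_{\tau\bar{\cal D}}(x)$ reduces to minimizing $\tfrac{\tau}{2}\|y - x\|^{2}$ subject to $y \in \bar{\cal D}$, which is exactly the Euclidean projection $\Pi_{\bar{\cal D}}(x)$ (this is independent of $\tau > 0$). So the claim becomes the well-known fact that projection onto a closed convex set is non-expansive.

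Next I would invoke the variational inequality characterizing the projection: for any $x$, the point $p = \Pi_{\bar{\cal D}}(x)$ is the unique element of $\bar{\cal D}$ satisfying
\begin{equation*}
\langle x - p, \, q - p \rangle \leq 0 \quad \text{for all } q \in \bar{\cal D}.
\end{equation*}
Setting $p_{x} = \Pi_{\bar{\cal D}}(x)$ and $p_{y} = \Pi_{\bar{\cal D}}(y)$, I would take $q = p_{y}$ in the inequality for $p_{x}$ and $q = p_{x}$ in the inequality for $p_{y}$, and then add the two resulting inequalities. After cancellation this yields
\begin{equation*}
\langle (x - y) - (p_{x} - p_{y}), \, p_{y} - p_{x} \rangle \leq 0,
\end{equation*}
which rearranges to $\|p_{x} - p_{y}\|_{2}^{2} \leq \langle x - y, \, p_{x} - p_{y} \rangle$.

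Finally, applying Cauchy--Schwarz to the right-hand side gives $\|p_{x} - p_{y}\|_{2}^{2} \leq \|x - y\|_{2} \,\|p_{x} - p_{y}\|_{2}$, and dividing by $\|p_{x} - p_{y}\|_{2}$ (the result being trivial when this is zero) yields the desired non-expansiveness bound. There is no substantive obstacle here; the only subtlety worth flagging is that the two-sided variational inequality in fact shows the stronger \emph{firm} non-expansiveness $\|p_{x} - p_{y}\|_{2}^{2} \leq \langle x - y, p_{x} - p_{y}\rangle$, of which plain non-expansiveness is an immediate consequence. Since the statement only needs the weaker bound, the argument sketched above suffices and the lemma follows directly.
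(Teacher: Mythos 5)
Your proof is correct. The paper itself gives no argument for this lemma---it is stated as a recalled fact with a citation to Rockafellar (1976)---so there is nothing to compare against; what you have written is precisely the standard variational-inequality proof (reduce the prox of an indicator to the Euclidean projection, apply the characterization $\langle x - p, q - p\rangle \le 0$ at both points, add, and use Cauchy--Schwarz), and it correctly establishes the stronger firm non-expansiveness along the way. The only implicit hypothesis worth stating is that $\bar{\cal D}$ be nonempty and closed so that the projection exists and is unique, which is tacitly assumed in the lemma.
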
 
Lemma \ref{proxop} is useful for proving linear convergence for a generalized convex constraint on $\Sigma$. We now formally show global linear convergence of generalized \text{G-AMA}.
\begin{theorem}
\label{t:gengama}
The new iterate $Y_{+}$ of the generalized \text{G-AMA} algorithm from (\ref{boundconupdate}) and (\ref{mod2iter}) satisfy 
\begin{alignat}{1}
\label{mt1}
\|Y_{+} - Y_{*}\|_{F} \leq \gamma \|Y - Y_{*}\|_{F}, 
\end{alignat}
for some $\gamma < 1$ and hence the iterates converges to a $\epsilon$-optimal solution $Y_{*}$ in ${\cal O}\left(\log(1/\epsilon)\right)$ iterations.
\end{theorem}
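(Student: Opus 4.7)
The plan is to mirror the argument that yielded Theorem \ref{t:gama}, with two adjustments: replacing the clip-plus-shift operator by the non-expansive map appearing in each generalization, and (for the linear-transform generalization) working with the modified smooth dual objective $\tilde{h}^*(Y) = -\log\det(S + B Y A^*)$ in place of $-\log\det Y$. The first step is to confirm that Lemmas \ref{upperbound} and \ref{lowerbound} still apply so that the iterates remain trapped in a compact set $\alpha I \preceq Y_k \preceq \beta I$ for uniform $\alpha, \beta > 0$. The text preceding the theorem asserts this; I would spell out the verification by checking that the sufficient-descent rule (\ref{suffdesc}) (or its analogue involving $\tilde{h}^*$) and the positive-definiteness guard continue to hold under the modified $Y$-update, so that the two-sided eigenvalue bounds carry over unchanged.

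For the bound-constrained case (\ref{boundconupdate}), the fixed-point identity $Y_* = \Pi_{\cal D}(Y_* + \tau Y_*^{-1})$ combined with non-expansiveness of the projection $\Pi_{\cal D}$ from Lemma \ref{proxop} gives
\begin{equation*}
\|Y_+ - Y_*\|_F \;\leq\; \bigl\|(Y - Y_*) + \tau\bigl(Y^{-1} - Y_*^{-1}\bigr)\bigr\|_F.
\end{equation*}
The right-hand side is then bounded by $\max\{|1-\tau/\alpha^2|, |1-\tau/\beta^2|\}\,\|Y - Y_*\|_F$ using Lemma \ref{lipschitz} and the strong convexity of $-\log\det$ on the compact set (from (\ref{strongconvex})), which yields the contraction factor $\gamma < 1$ for sufficiently small $\tau > 0$, exactly as in Theorem \ref{t:gama}.

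For the linear-transform generalization (\ref{mod2iter}), the update has the form $Y_+ = \mathcal{C}_\lambda\bigl[Y - \tau \nabla \tilde{h}^*(Y)\bigr]$ with $\nabla \tilde{h}^*(Y) = -A^*(S + BYA^*)^{-1}B$. Non-expansiveness of $\mathcal{C}_\lambda$ (a coordinatewise projection onto $[-\lambda, \lambda]$) again reduces matters to showing that $Y \mapsto Y - \tau \nabla \tilde{h}^*(Y)$ is a contraction on the relevant compact set. Lipschitz continuity and strong convexity constants for $\tilde{h}^*$ are obtained by applying Lemma \ref{lipschitz} to the transformed argument $S + BYA^*$ and absorbing the operator norms of $A$ and $B$ into modified constants $\tilde\alpha, \tilde\beta$. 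The main obstacle, and the place where the proof is not purely mechanical, is ensuring $S + BYA^* \succ 0$ along every iterate, which is not automatic from $Y \succ 0$ when $A$ or $B$ is rank-deficient; I would therefore assume full column rank of $A, B$ (or initialize so that $S + BY_0 A^* \succ 0$ and verify invariance via the descent step). Once that compactness is in hand, the contraction argument closes as in the bound-constrained case and delivers the $\mathcal{O}(\log(1/\epsilon))$ iteration complexity claimed.
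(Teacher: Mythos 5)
Your proposal follows essentially the same route as the paper's proof: part (a) is handled by the non-expansiveness of $\Pi_{\cal D}$ from Lemma \ref{proxop} followed by the mean-value-theorem contraction bound of Lemma \ref{strongconvexlemma}, and part (b) by the analogous mean-value argument on $Y \mapsto Y + \tau A^{*}(S+BYA^{*})^{-1}B$ with contraction constants absorbing $\rho(A^{*}A)$ and $\rho(B^{*}B)$. Your additional caution about ensuring $S + BYA^{*} \succ 0$ along the iterates (e.g., via a rank condition on $A$, $B$) is a point the paper's proof leaves implicit, but it does not change the argument.
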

\begin{proof}
The proof is given in the Appendix.
\end{proof}

\section{Numerical Experiments}
The performance of \text{G-AMA} was compared to other algorithms on synthetic as well as real data sets. We begin with illustrating linear convergence of \text{G-AMA} and how the convergence of \text{G-AMA} varies with condition number of the estimate $\hat{\Sigma}^{*}$. The left hand plot in Fig. \ref{fig:gaplincon} illustrates the convergence of $\|Y_{k}-Y_{*}\|_{F}$ and the right hand plot demonstrates the convergence of the \text{G-AMA} iterates with respect to the duality gap $\Delta_{\mbox{opt}}$. The speed of the algorithm becomes slower as the condition number $\kappa(\Sigma^{*})$ increases.
\begin{figure}[h!]
\begin{center}
\includegraphics[width=\textwidth]{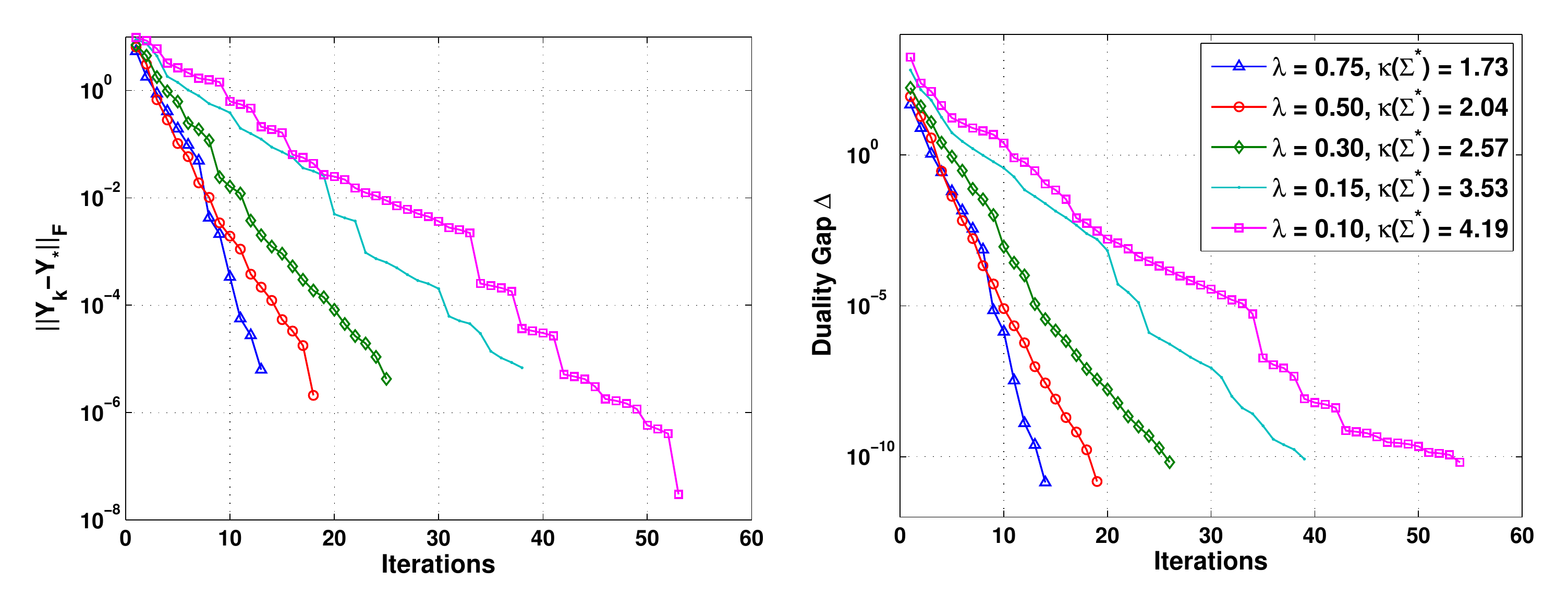}
\end{center}
\caption{Convergence of \text{G-AMA}: (a) $\|Y_{k}-Y_{*}\|_{F}$ and (b) Duality gap $\Delta_{\mbox{opt}}$}
\label{fig:gaplincon}
\end{figure}

Next, we demonstrate timing comparisons between \text{G-AMA}, \text{QUIC} and \text{G-ISTA} for solving Problem (\ref{mainprob}). We test the three algorithms on synthetic as well as real data sets. The timings reported in this section are the wall-clock timings and not the CPU timings. The three algorithms use LAPACK library for computing Cholesky factors and inversion which leverages multiple processors when available. This benefits the first order methods, \text{G-AMA} and \text{G-ISTA} which expend majority of their computational resources in these operations, in contrast to \text{QUIC} which expends computational resources in solving the lasso subproblem.

 
\subsection{Synthetic datasets} 
The first set of comparison experiments were conducted on synthetic data. For this subsection, the data was generated using the procedure in \cite{lu2010adaptive}. For a given problem size $p$, the underlying inverse covariance matrix $\Omega = \Sigma^{-1}$ was generated by choosing the off-diagonal entries i.i.d. from a uniform distribution over $[-1,1]$. To obtain a desired sparsity level of $sp\%$ (percentage of non-zero entries in $\hat{X}$), the off-diagonals were set to zero with probability $sp/100$ (As in \cite{brajarat2012nips}, sparsity levels of $3\%$ and $15\%$ were used). A multiple of the identity matrix was added to this matrix to adjust the smallest eigenvalue of $\Omega = \Sigma^{-1}$ to $1$. This procedure ensures that $\Omega$ is positive definite, sparse, and well-conditioned. Finally, datasets of $n$ i.i.d. samples were generated from the normal distribution ${\cal N}_{p}(0, \Omega^{-1})$. For each $\Omega$, sample sizes of $n=0.2p$ and $n=1.2p$ were chosen to illustrate both when $n << p$ and when $n \geq p$.

 
\subsection*{Heuristics and insights}
\label{hni}
During the numerical experiments it was observed that as $\lambda$ increases, the number of iterations go down. Moreover, for large values of $\lambda$, the performance of each algorithm depends on the initial point that is chosen. While \text{G-ISTA} and \text{QUIC} use $\Omega_{0} = ($\texttt{diag(}$S$\text{)}$+\lambda I)^{-1}$ as the starting point, the inverse $Y_{0,1} = ($\texttt{diag(}$S$\texttt{)}$+\lambda I)$ works very well for \text{G-AMA} for large values of $\lambda$. However, as the value of $\lambda$ decreases, it was observed that the next iterate $Y_{1}$ was not always positive definite. This can be explained by violation of dual feasibility of $Y_{0}$ for small values of $\lambda$. A better consistent starting point which works for all values of $\lambda$ is $Y_{0,2} = S + \lambda I$. The timing comparisons shown in this section display the time required for convergence starting with the best initial point. In the majority of cases, the best initial point was $S + \lambda I$ for \text{G-AMA}. As a general guideline, $Y_{0,2}$ can be used in general, and $Y_{0,1}$ can also be used to obtain some speedup when the expected solution is very sparse ($< 2\%$). We also observe that as $p$ increases \text{G-AMA} and \text{G-ISTA}, scale more favorably than \text{QUIC}. A general observation as $n$ increases is that all the methods perform better. This can be explained by the fact that condition number of the sample covariance $S$ improves with $n$. Therefore, the iterates and the optimal solution have better conditioning.

 
\subsection*{\text{G-AMA} vs \text{G-ISTA} vs \text{QUIC}} 
Similar to \text{G-ISTA} and \text{QUIC}, the \text{G-AMA} was implemented in $C$ with a wrapper for \textit{MATLAB} and the three methods were compared in terms of time taken on a machine running the 32-bit version of Ubuntu 12.10 with an Intel Core i7 870 CPU and 8 GB RAM. Table \ref{tab:compare1} and Table \ref{tab:compare2} display the results for problems with $p = 2000$ and $p=5000$ for varying $n$ and $sp$.
\begin{table}[h!]
\centering
\caption{Time (Iterations) and Duality Gap Comparison for \text{G-AMA} with \text{G-ISTA} and \text{QUIC}}
\label{tab:compare1}
\begin{footnotesize}
\begin{tabular}{|c||c| |c|c| |c|c| |c|c| |c|c|} \hline 
&&\multicolumn{2}{c||}{\textbf{algorithm:}} & \multicolumn{2}{c||}{\text{G-ISTA}} & \multicolumn{2}{c||}{\text{QUIC}} & \multicolumn{2}{c|}{\text{G-AMA}} \\ \hline
\textbf{problem} & $\bm{\lambda}$ & nnz($Z_{*}$) &  $\kappa(Y_{*})$ & \textbf{time/iter} & \textbf{gap} & \textbf{time/iter} & \textbf{gap} & \textbf{time/iter} & \textbf{gap} \\
\hline \hline 
 \multirow{4}{*}{\hspace{-0.08in}$\begin{array}{c} p = 2000; \\ n = 400; \\ sp = 03\% \end{array}$\hspace{-0.08in}} 
& 0.12 &  2.36\% &  2.24 &   17s/27   &5e-10&   28s/9   &8e-10&   \textbf{6s}/15  &7e-11\\
& 0.09 &  7.24\% &  6.92 &   56s/75   &1e-10&   99s/10  &1e-10&  \textbf{27s}/57  &6e-11\\
& 0.06 & 15.15\% & 19.13 &   51s/98   &3e-07&  331s/13  &3e-07&  \textbf{28s}/59  &5e-09\\	
& 0.03 & 27.76\% & 45.67 &  327s/457  &4e-06& 1853s/24  &4e-06&  \textbf{18s}/43  &3e-06\\ 
\hline \hline
 \multirow{4}{*}{\hspace{-0.08in}$\begin{array}{c} p = 2000; \\ n = 2400; \\ sp = 03\% \end{array}$\hspace{-0.08in}} 
& 0.12 &  0.10\% &  1.14 &    9s/18   &1e-12&\textbf{3s}/6&1e-12& \textbf{3s}/6   &1e-12\\
& 0.09 &  0.91\% &  1.48 &   21s/31   &2e-12&    9s/7   &2e-12&   \textbf{6s}/13  &1e-12\\
& 0.06 &  3.06\% &  2.76 &   22s/30   &1e-09&   33s/8   &2e-08&  \textbf{18s}/32  &1e-08\\	
& 0.03 & 14.56\% &  9.87 &   37s/53   &2e-07&  585s/16  &3e-07&  \textbf{25s}/41  &4e-08\\ 
\hline \hline
 \multirow{4}{*}{\hspace{-0.08in}$\begin{array}{c} p = 2000; \\ n = 400; \\ sp = 15\% \end{array}$\hspace{-0.08in}} 
& 0.12 &  2.83\% &  2.80 &    8s/15   &1e-06&   24s/7   &5e-06&   \textbf{7s}/16  &2e-07\\
& 0.09 &  7.80\% & 11.67 &   28s/49   &4e-08&  112s/10  &1e-07&  \textbf{22s}/49  &2e-08\\
& 0.06 & 15.23\% & 28.07 &   90s/150  &8e-07&  387s/13  &1e-06&  \textbf{37s}/63  &4e-07\\	
& 0.03 & 27.37\% & 63.37 &  654s/813  &1e-06& 1990s/24  &1e-06&  \textbf{41s}/67  &4e-08\\ 
\hline \hline
 \multirow{4}{*}{\hspace{-0.08in}$\begin{array}{c} p = 2000; \\ n = 2400; \\ sp = 15\% \end{array}$\hspace{-0.08in}} 
& 0.12 &  0.03\% &  1.10 &    3s/8    &1e-12&    3s/6   &1e-12&   \textbf{2s}/4   &1e-12\\
& 0.09 &  0.63\% &  1.34 &    6s/14   &1e-09&   11s/9   &3e-09&   \textbf{3s}/7   &1e-09\\
& 0.06 &  5.38\% &  4.01 &   26s/32   &9e-08&  129s/12  &1e-07&  \textbf{16s}/31  &5e-09\\	
& 0.03 & 19.92\% & 17.55 &   87s/113  &5e-07&  351s/11  &3e-06&  \textbf{19s}/36  &1e-06\\ 
\hline \hline
\end{tabular}
\end{footnotesize}
\end{table}
\begin{table}[h!]
\centering
\caption{Time (Iterations) and Duality Gap Comparison for \text{G-AMA} with \text{G-ISTA} and \text{QUIC}}
\label{tab:compare2}
\begin{footnotesize}
\begin{tabular}{|c|c| |c|c| |c|c| |c|c| |c|c|} \hline 
&&\multicolumn{2}{c||}{\textbf{algorithm:}} & \multicolumn{2}{c||}{\text{G-ISTA}} & \multicolumn{2}{c||}{\text{QUIC}} & \multicolumn{2}{c|}{\text{G-AMA}} \\ \hline
\textbf{problem} & $\bm{\lambda}$ & nnz($Z_{*}$) &  $\kappa(Y_{*})$ & \textbf{time/iter} & \textbf{gap} & \textbf{time/iter} & \textbf{gap} & \textbf{time/iter} & \textbf{gap} \\
\hline \hline
 \multirow{4}{*}{\hspace{-0.08in}$\begin{array}{c} p = 5000; \\ n = 1000; \\ sp = 3\% \end{array}$\hspace{-0.08in}} 
&0.10 & 0.47\%& 1.43 &  59s/11 &  2e-07 & 102s/7  &  3e-07 & \textbf{41s}/7  & 7e-08 \\
&0.08 & 2.03\%& 2.29 & 215s/28 &  1e-11 & 434s/9  &  1e-11 & \textbf{92s}/16  & 1e-11 \\
&0.06 & 6.36\%& 8.29 & 478s/54 &  9e-09 & 1607s/10  &  9e-09 & \textbf{284s}/47  & 7e-09 \\
&0.04 &13.69\%&23.12 & 935s/127 &  6e-07 & 6315s/14  &  1e-06 & \textbf{322s}/53  & 1e-06 \\
\hline \hline
\multirow{4}{*}{\hspace{-0.08in}$\begin{array}{c} p = 5000; \\ n = 6000; \\ sp = 3\% \end{array}$\hspace{-0.08in}} 
&0.08 & 0.16\%& 1.15 &  76s/11 &  2e-08 &  40s/5  &  9e-08 & \textbf{31s}/5  & 1e-09 \\
&0.06 & 1.12\%& 1.59 &  59s/11 &  2e-07 & 145s/6  &  1e-06 & \textbf{52s}/9  & 9e-09 \\
&0.04 & 3.24\%& 3.65 & 213s/27 &  2e-09 & 563s/8  &  3e-09 & \textbf{167s}/27  & 8e-10 \\
&0.02 &12.89\%&15.03 & 391s/61 &  4e-06 &4682s/12  &  4e-06 & \textbf{247s}/41  & 1e-06 \\
\hline \hline
 \multirow{4}{*}{\hspace{-0.08in}$\begin{array}{c} p = 5000; \\ n = 1000; \\ sp = 15\% \end{array}$\hspace{-0.08in}} 
&0.08 & 2.53\%& 3.31 &\textbf{84s}/12 & 6e-07 &365s/7  &  9e-07 &       135s/20  & 7e-07 \\
&0.06 & 6.94\%&18.20 &  933s/109 & 1e-08 &2099s/11  &  1e-08 & \textbf{507s}/82  & 7e-09 \\
&0.04 &13.88\%&39.37 & 1599s/229 & 4e-05 &5270s/12  &  4e-05 & \textbf{270s}/45  & 4e-05 \\
&0.02 &26.02\%&84.12 & 6865s/1000& 4e-03 &31841s/24  &  3e-06 & \textbf{390s}/62  & 2e-07 \\
 \hline \hline
\multirow{4}{*}{\hspace{-0.08in}$\begin{array}{c} p = 5000; \\ n = 6000; \\ sp = 15\% \end{array}$\hspace{-0.08in}} 
&0.08 & 0.04\% & 1.10 & 66s/11  & 1e-10 & 31s/5  & 3e-09 & \textbf{24s}/4  & 4e-11 \\
&0.06 & 0.72\% & 1.43 & 53s/10  & 8e-07 & 79s/5  & 1e-05 & \textbf{35s}/6  & 8e-08 \\
&0.04 & 5.50\% & 7.14 &236s/31  & 1e-07 &1386s/10& 8e-07 & \textbf{218s}/37 & 9e-07 \\
&0.02 &18.66\% &26.96 &1771s/218& 4e-07 &7132s/12& 4e-06 & \textbf{225s}/39 & 3e-07 \\
 \hline \hline
\end{tabular}
\end{footnotesize}
\end{table}

The first order methods \text{G-AMA} and \text{G-ISTA} are terminated using a tolerance condition on the duality gap. However, the termination of \text{QUIC} is controlled by the tolerance on sub-gradient condition only. The problems were solved using \text{QUIC} by setting the tolerance $\epsilon_{\text{tol}} = 10^{-10}$ and the corresponding duality gap achieved by \text{QUIC} was then used for \text{G-AMA}, and \text{G-ISTA}. The times reported for \text{G-AMA} and \text{G-ISTA} in Tables \ref{tab:compare1} and \ref{tab:compare2} are the times required to achieve a better duality gap than \text{QUIC}. This explains the different duality gaps reported for each of the three methods.

Note that the time required for convergence using \text{G-AMA} is substantially lower than \text{G-ISTA} and \text{QUIC} in all instances except one. Moreover, the advantage of \text{G-AMA} is even more prominent when $\lambda$ is small and/or when the solution is ill-conditioned. As discussed in the next Subsection, this benefit of \text{G-AMA} is magnified when working with real data where the optimal solutions are often highly ill-conditioned.

 
\subsection{Real dataset: \textit{Estrogen} and \textit{Temperature}} 
We compare the three methods on two real datasets \textit{Estrogen} from \cite{estrogen} and \textit{Temperature} from \cite{temperature}. The \textit{Estrogen} dataset consists of expression data for $p=682$ genes from $n=158$ patients with breast cancer. The \textit{Temperature} dataset consists of average annual temperature measurements from $p=1732$ locations over $n=157$ years (1850-2006). The values of $\lambda$ chosen to test the algorithms varied in order to get sparsity between $2\%-12\%$. Both datasets yield extremely ill-conditioned problems for the above values of $\lambda$. We know that the ability of \text{G-AMA} to control the duality gap is extremely useful for attaining accurate solutions in such highly unstable regime. A maximum of $5000$ iterations were used for \text{G-ISTA} and \text{G-AMA}. 
\begin{table}[h!]
\centering
\caption{Wall times, iterations and duality gap comparison for \text{G-AMA} with \text{G-ISTA} and \text{QUIC}. 
 \textit{(Note that the reported duality gap for \text{QUIC} cannot be reduced further for any tolerance set for the solver. For timing comparison purposes, \text{G-AMA} therefore reports two times. The first achieves a duality gap at least as small as \text{QUIC}. The last column$^*$ reports the time to achieve a significantly lower fixed duality gap of $\epsilon_{\texttt{opt}} = 10^{-10}$, a gap that \text{QUIC} cannot achieve for any subgradient tolerance. \text{G-ISTA} is significantly slower or fails to converge within 5000 iterations for these cases.)}}
\label{tab:compare3}
\begin{footnotesize}
\begin{tabular}{|c|c|c| |c|c| |c|c| |c|c| |c|} \hline 
&\multicolumn{2}{c||}{\textbf{algorithm:}} & \multicolumn{2}{c||}{\text{G-ISTA}} & \multicolumn{2}{c||}{\text{QUIC}} & \multicolumn{3}{c|}{\text{G-AMA}} \\ \hline
$\bm{\lambda}$ & nnz($Z_{*}$) &  $\kappa(Y_{*})$ & \textbf{time/iter} & \textbf{gap} & \textbf{time/iter} & \textbf{gap} & \textbf{time/iter} & \textbf{gap} & \textbf{time}\\ \hline 
 \multicolumn{10}{|c|}{\textit{Dataset: Estrogen} p = 682; n = 158} \\ \hline
\hline
0.40 &    2.62\% &  42.36 &   21s/532  & 1e-06 &   \textbf{3s}/13 & 2e-06 &        {5s/180}    & 2e-06 &   \textbf{10s}* \\ \hline
0.30 &    3.39\% &  88.57 &   35s/911  & 7e-07 &   \textbf{6s}/19 & 2e-07 &        {17s/548}   & 4e-08 &   \textbf{21s}* \\ \hline
0.20 &    4.39\% & 192.70 &   80s/2182 & 7e-07 &         23s/27   & 8e-07 &   \textbf{16s}/488 & 8e-07 &   \textbf{27s}* \\ \hline
0.10 &    7.10\% & 475.24 &   NA/5000  &   NA  &         64s/49   & 2e-06 &   \textbf{11s}/339 & 2e-06 &   \textbf{22s}* \\ \hline
0.05 &   12.36\% & 985.99 &   NA/5000  &   NA  &        335s/93   & 2e-06 &   \textbf{11s}/321 & 1e-06 &   \textbf{16s}* \\ \hline 
\hline
 \multicolumn{10}{|c|}{\textit{Dataset: Temperature} p = 1732; n = 157} \\ \hline
0.30 &   1.75\% &  589.70 &  1242s/2940 & 5e-07 & \textbf{279s}/21& 1e-06 &        {438s}/1161 & 9e-07 &  \textbf{639s}* \\ \hline    
0.20 &   2.02\% & 1076.07 &    NA/5000 &   NA  &     589s/28      & 4e-06 & \textbf{356s}/931  & 3e-06 &  \textbf{642s}* \\ \hline     
0.10 &   3.30\% & 2301.98 &    NA/5000 &   NA  &     953s/47      & 2e-05 & \textbf{216s}/581  & 1e-05 &  \textbf{301s}* \\ \hline
0.05 &   5.59\% & 4505.01 &    NA/5000 &   NA  &    3402s/90      & 8e-06 & \textbf{249s}/652  & 6e-06 &  \textbf{374s}* \\ \hline
0.03 &   7.84\% & 7201.09 &    NA/5000 &   NA  &    9433s/136     & 2e-05 & \textbf{209s}/580  & 9e-06 &  \textbf{346s}* \\ \hline
\hline
\end{tabular}
\end{footnotesize}
\end{table}

It was observed that the duality gap achieved by \text{QUIC} decreases with tolerance on the subgradient. However, it stalls at a particular point despite decreasing the tolerance provided to the solver. The duality gaps reported in Table \ref{tab:compare3} for \text{QUIC} are the best that could be achieved by any tolerance on subgradient condition. These are as high as $2\times10^{-6}-2\times10^{-5}$ when the problem is highly ill-conditioned. In contrast, there are two timings reported for \text{G-AMA}. Similar to synthetic experiments, the first is the time required to achieve a duality gap at least as small as the gap achieved by \text{QUIC}. The second time (last column$^{*}$) is that required to achieve a fixed duality gap of $\epsilon = 10^{-10}$. We note that \text{QUIC} is unable to achieve this level of duality gap for any subgradient tolerance, whereas \text{G-ISTA} is very slow in such problems. \text{G-AMA} suffers from neither of these two issues.

Now, as seen in Table \ref{tab:compare3}, for both examples the condition number of the optimal solution is extremely high for lower values of $\lambda = 0.20, 0.10, 0.05, 0.03$. The \text{QUIC} algorithm performs better than \text{G-AMA} when $\lambda$ is higher, but is extremely slower in comparison as $\lambda$ reduces. For the \textit{Estrogen} dataset, \text{G-AMA} is up to 30 times faster than \text{QUIC} when the condition number of solution is almost 1000. For the \textit{Temperature} dataset, when the sparsity level is around $8\%$ for $\lambda = 0.03$, \text{G-AMA} is 45 times faster than \text{QUIC}. Such speedup is extremely vital for applications where regularized maximum likelihood estimation problem need to be solved several times. In particular, determining the level of the penalty parameters is often done through cross-validation which in turn requires solving the $\ell_1$ regularized maximum-likelihood estimation problem for a grid of values of the penalty parameter [see for example \cite{schneider2001analysis} for the climate field reconstruction application]. In addition, when uncertainty quantification is required, both parametric and non-parametric bootstrap require solutions of the optimization problem numerous times. 

\subsection{Portfolio optimization using sparse inverse covariance estimates}
We now apply \text{G-AMA} for solving the regularized maximum likelihood covariance estimation problem in the context of financial portfolio optimization.  A regularized inverse covariance estimate is a critical ingredient in determining the minimum variance portfolio. Moreover, extremely fast estimation methods like \text{G-AMA} are required for such problems as covariance matrices need to be estimated repeatedly over moving time blocks. In this section we shall compare the performance of \text{G-AMA}with popular methods in the literature to illustrate its efficacy in the financial portfolio optimization context.

\subsection*{Minimum variance portfolio with rebalancing}
A portfolio optimization problem refers to the problem of determining weights or proportions (in monetary terms) in order to invest in a set of securities that minimize the risk for a given level of return. We will focus on the minimum variance portfolio using the covariance estimate computed by solving problem \ref{mainprob}. For a portfolio with $p$ risky assets, let $r_{i}$ denote the return of asset $i$ over a given period, \ie, its change in price over one time period divided by its price at the beginning of the period. Let $\Sigma$ denote the covariance matrix of $r = (r_{1}, r_{2}, \ldots, r_{p})$ and $w_{i}$ be the weight of asset $i$ in the portfolio during a given period ($w_{i}$ could be positive or negative based on long or short positions). The minimum variance portfolio selection problem solves
\begin{alignat}{1}
\minimize{w} \ &  w^{T} \Sigma w \nonumber \\
\st \ & \mathbf{1}^{T}{w} = 1,
\end{alignat} 
where the objective denotes the variance of the return (also defined as the risk) associated with a particular portfolio. The linear constraint represents the budget constraint. For a given $\Sigma$, this quadratic program has a closed form analytic solution $w^{*} = (\mathbf{1}^{T}\Sigma^{-1}\mathbf{1})^{-1}\Sigma^{-1}\mathbf{1}$.

The standard problem defined above assumes stationarity of the returns. In order to account for non-stationarity, we employ the minimum variance portfolio rebalancing (MVR) strategy. This approach updates the portfolio weights every $L$ units of time, effectively dividing the trading horizon into blocks each consisting of $L$ time units. At the start of each block, the minimum variance portfolio problem is solved based on the past estimation horizon size of $N_{\text{estim}}$ observations of returns. The weights $w$ are then held constant for $L$ time units during the holding periods. We assume that the total number of time units in the entire period, $N_{\text{tot}} = N_{\text{estim}} + K \cdot L$, for some positive integer $K$. Therefore, we have $K$ updates of portfolio weights, say $w^{(j)}$ over the holding period of $\left[N_{\text{estim}} + 1 + (j-1) \cdot L, N_{\text{estim}} + j \cdot L\right]$ for $j = 1,\ldots K$. Here, $w^{(j)}$ is calculated using the estimated covariance of asset returns over the $j$th holding period.

\subsection*{Application to the DJIA}
In the numerical study, we use the stock data for 29 stocks that constituted the Dow Jones Industrial Average (\text{DJIA}) as of July 2008, and had a start date of January 01, 1995 and an end date of October 26, 2012 (4473 trading days). We have chosen this basket of securities deliberately for comparison purposes as covariance estimation methods have been recently illustrated on this data. The first rebalancing interval begins from January 01, 1995 and the last interval begins on July 02, 2012. We have,
\[
N_{\text{tot}} = 4473, \hspace{0.2in} K = 58, \hspace{0.2in} L = 80, \hspace{0.2in} N_{\text{estim}} = 75, 150, 225, 300.
\] 
We estimate the sample covariance $S$ using $N_{\text{estim}}$ weekly returns from the \text{DJIA}. The entire trading horizon consists of $K = 58$ holding periods of $L = 80$ days (16 weeks) each. We compare the MVR strategy using covariance matrices generated by \text{SparseConc} using \text{G-AMA} (with constant penalty parameter $\lambda = 0.1\|S\|_2$) versus covariance estimation methods considered in \cite{won2012condition}. These include \textbf{CondReg} (condition number regularized covariance from \cite{sang2012condreg}), \textbf{Factor} (a shrinkage method from \cite{ledoit2003honey}), \textbf{LedoitWolf} (a linear shrinkage scheme from \cite{ledoit2004well}), \textbf{Sample} ( sample covariance matrix) as well as the \textbf{DJIA}. The performance of these methods is compared on the metrics of \textit{Realized Return}, \textit{Realized Risk}, \textit{Realized Sharpe Ratio} (ratio of excess expected return and risk), \textit{Turnover} (measure of portfolio switching for each time-interval), \textit{Normalized Wealth Growth} and the \textit{Size of the Short Side} (measure of negative weights chosen). The formal definitions for the performance metrics are given in Appendix C.

\subsection*{Performance metrics}
Fig. \ref{fig:expt9} shows the normalized wealth growth over the trading horizon for five methods and the \text{DJIA} for four different values of $N_{\text{estim}}$. As can be seen, the wealth growth using G-AMA estimates give the best results compared to other methods. Table \ref{rrrrsr} shows the values for realized return, realized risk, Sharpe ratio, turnover and size of short side. Each entry is the mean (standard deviation) of the corresponding metric over the trading period. The standard deviations are computed in a heteroskedasticity-and-autocorrelation consistent manner discussed in \cite[Sec. 3.1]{ledoit2008robust}. 
The Turnover values are illustrated in Fig. \ref{fig:expt9b}.
\begin{figure}[htp!]
\begin{center}$
\begin{array}{cc}
\includegraphics[width=2.75in]{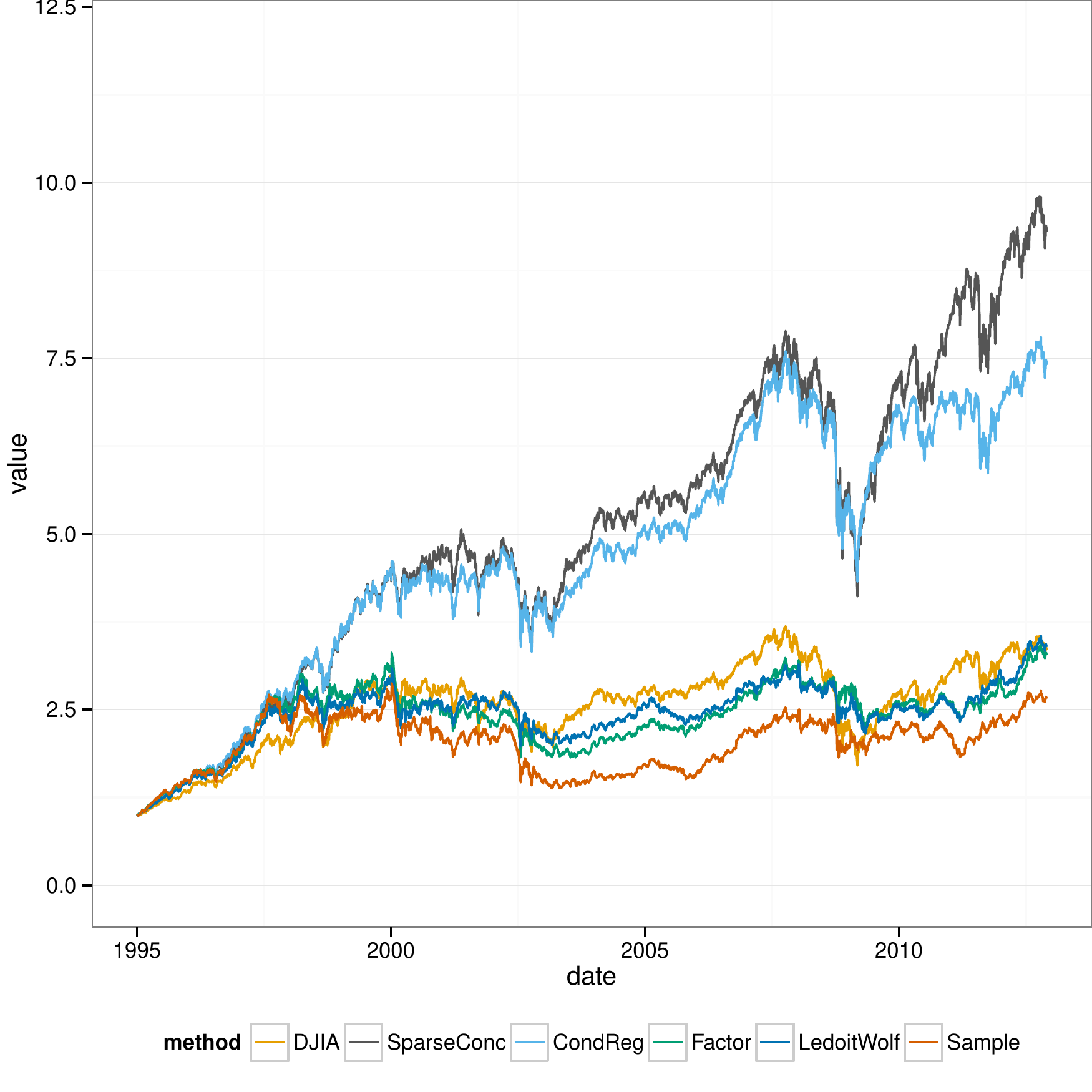} &
\includegraphics[width=2.75in]{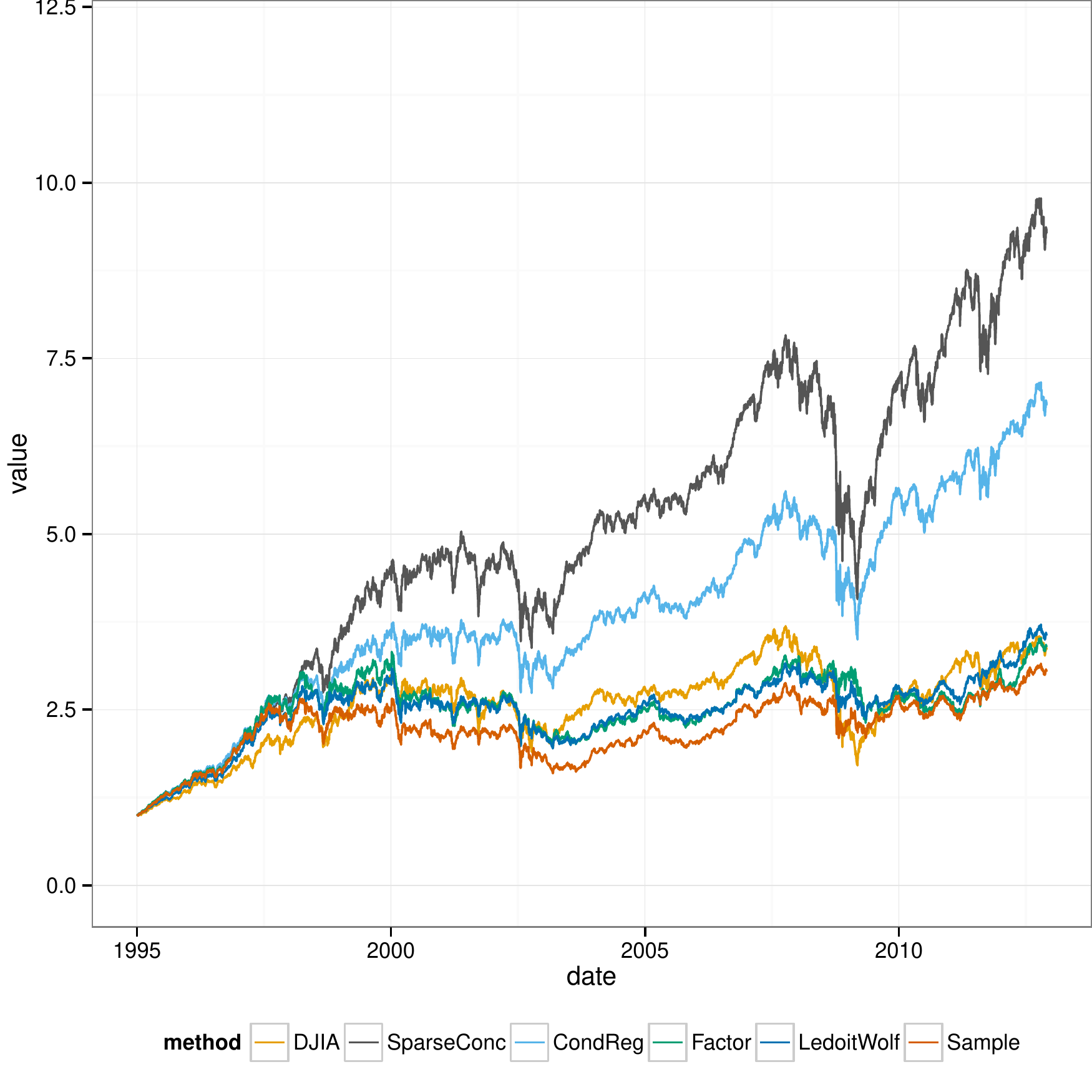} \\
\includegraphics[width=2.75in]{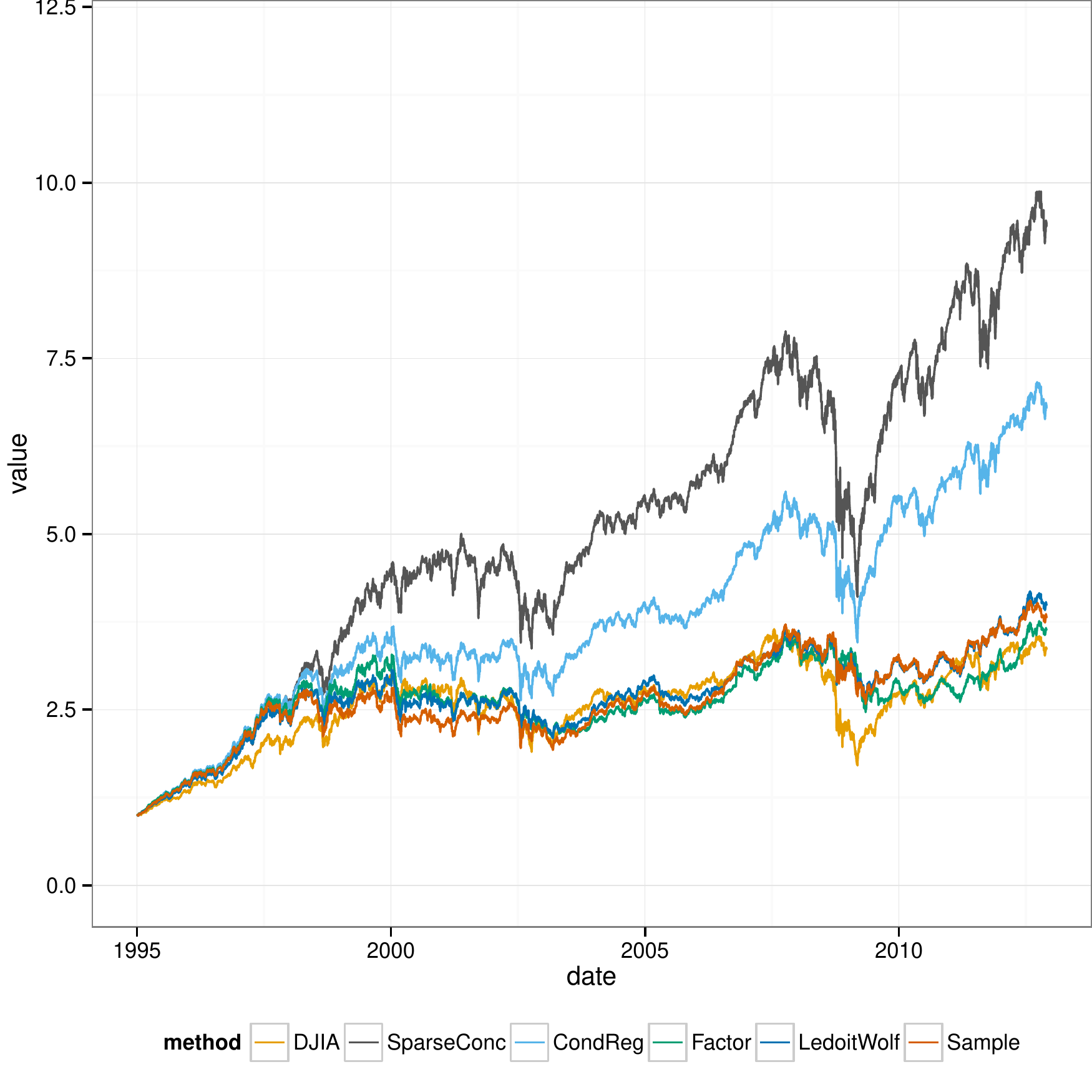} &
\includegraphics[width=2.75in]{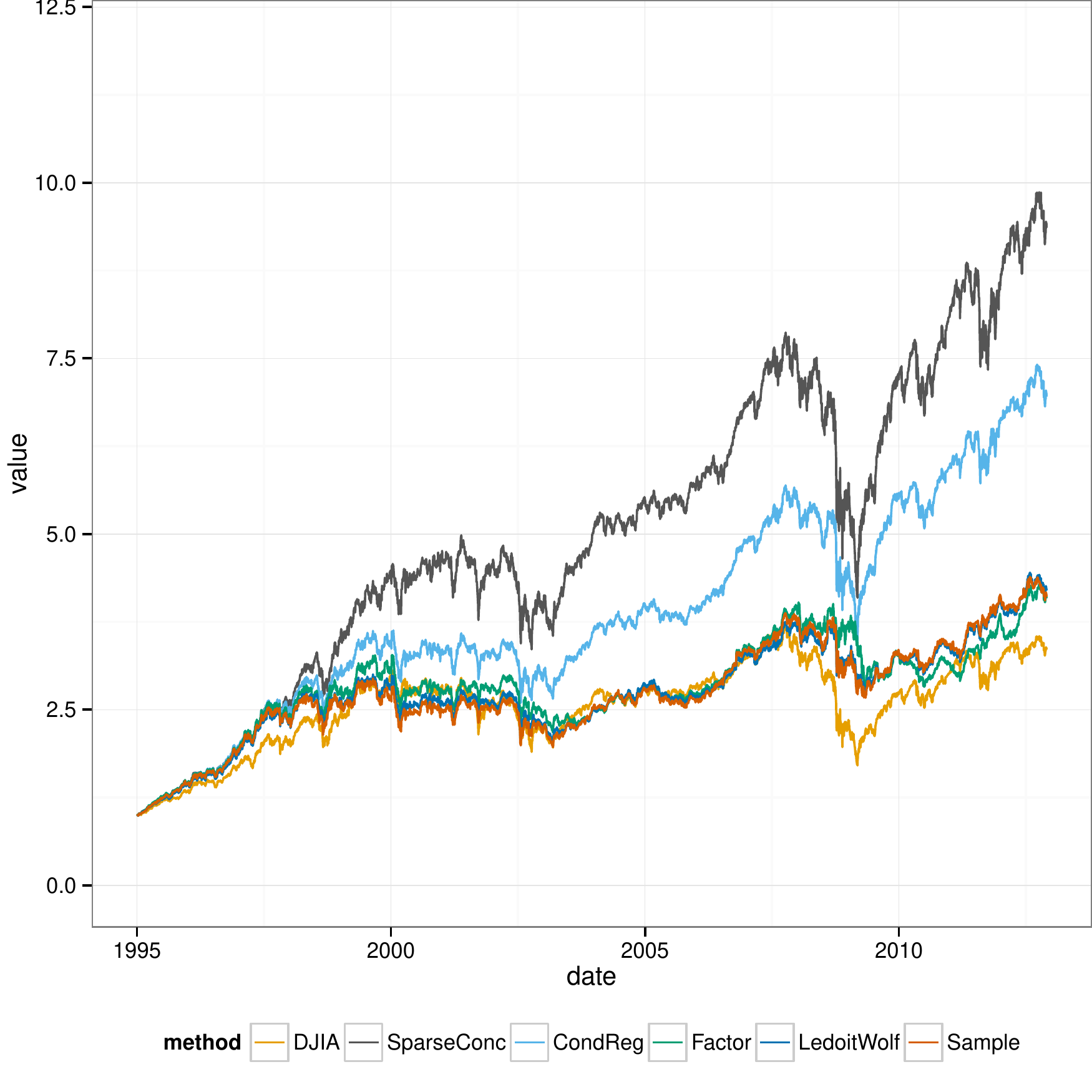} 
\end{array}$
\end{center}
\caption{Normalized wealth growth of \text{G-AMA} vs other methods ($\lambda = 0.10$)}
\label{fig:expt9}
\end{figure}
\begin{table}
\centering
\caption{Realized Return, Realized Risk, Sharpe Ratio, Turnover and Size of Short Side}
\label{rrrrsr}
\begin{footnotesize}
\begin{tabular}{|c||c|c|c|c|c|}
\hline
$\begin{array}{c}\mbox{covariance} \\ \mbox{regularization scheme}\end{array}$
& $\begin{array}{c}\mbox{return [\%]}\end{array}$
& $\begin{array}{c}\mbox{risk [\%]}\end{array}$
& $\begin{array}{c}\mbox{Sharpe ratio}\end{array}$
& $\begin{array}{c}\mbox{Turnover}\end{array}$
& $\begin{array}{c}\mbox{Size of short side}\end{array}$
\\ \hline \hline
& \multicolumn{5}{c|}{$N_{\text{estim}}=75$}\\ \hline
\text{SparseConc}&  14.376 (4.12) & 19.035 (0.60) & \textbf{0.493} (0.22) & \textbf{0.186} (0.04) &  \textbf{0.000} (0.00)\\
\text{CondReg}   &  13.256 (3.83) & 17.681 (0.54) & 0.467 (0.22) & 0.593 (0.44) &  0.043 (0.07) \\
\text{Factor}    &   9.581 (3.58) & 15.722 (0.44) & 0.291 (0.23) & 1.809 (0.45) &  0.219 (0.05) \\ 
\text{LedoitWolf}&   9.687 (3.61) & 15.350 (0.40) & 0.305 (0.24) & 1.785 (0.54) &  0.201 (0.07) \\
\text{Sample}    &   9.545 (4.12) & 17.486 (0.34) & 0.260 (0.24) & 2.673 (0.76) &  0.303 (0.06) \\
\hline
& \multicolumn{5}{c|}{$N_{\text{estim}}=150$}\\ \hline
\text{SparseConc}&  14.353 (4.12) & 19.043 (0.60) & \textbf{0.491} (0.22) & \textbf{0.177} (0.04) &  \textbf{0.000} (0.00) \\
\text{CondReg}   &  12.917 (3.61) & 16.493 (0.50) & 0.480 (0.22) & 0.921 (0.37) & 0.090 (0.06) \\
\text{Factor}    &   9.690 (3.51) & 15.397 (0.42) & 0.305 (0.23) & 1.781 (0.45) & 0.213 (0.06) \\
\text{LedoitWolf}&   9.837 (3.41) & 14.868 (0.41) & 0.325 (0.23) & 1.748 (0.53) & 0.194 (0.07) \\
\text{Sample}    &   9.487 (3.59) & 15.728 (0.44) & 0.285 (0.23) & 2.144 (0.68) & 0.254 (0.06) \\
\hline
& \multicolumn{5}{c|}{$N_{\text{estim}}=225$}\\ \hline
\text{SparseConc}&  14.407 (4.12) & 19.053 (0.60) & \textbf{0.494} (0.22) & \textbf{0.173} (0.04) &  \textbf{0.000} (0.00)\\
\text{CondReg}   &  12.934 (3.55) & 16.131 (0.49) & 0.492 (0.22) & 1.035 (0.20) & 0.102 (0.04) \\
\text{Factor}    &  10.011 (3.47) & 15.256 (0.42) & 0.328 (0.23) & 1.751 (0.48) & 0.208 (0.06) \\
\text{LedoitWolf}&  10.411 (3.34 )& 14.690 (0.41) & 0.368 (0.23) & 1.719 (0.54) & 0.190 (0.07) \\
\text{Sample}    &  10.482 (3.44) & 15.144 (0.41) & 0.362 (0.23) & 1.980 (0.65) & 0.235 (0.07) \\
\hline
& \multicolumn{5}{c|}{$N_{\text{estim}}=225$}\\ \hline
\text{SparseConc}&  14.403 (4.13) & 19.085 (0.60) & 0.493 (0.22) & \textbf{0.170} (0.03) &  \textbf{0.000} (0.00)\\
\text{CondReg}   &  13.077 (3.54) & 16.164 (0.49) & \textbf{0.499} (0.22) & 1.037 (0.16) & 0.098 (0.03)\\
\text{Factor}    &  10.656 (3.47) & 15.262 (0.41) & 0.371 (0.23) & 1.735 (0.48) & 0.205 (0.06)\\ 
\text{LedoitWolf}&  10.680 (3.33) & 14.660 (0.40) & 0.387 (0.23) & 1.699 (0.54) & 0.186 (0.07)\\
\text{Sample}    &  10.809 (3.43) & 15.076 (0.40) & 0.385 (0.23) & 1.890 (0.62) & 0.223 (0.07) \\
\hline
\hline
\end{tabular}
\end{footnotesize}
\end{table}
\begin{figure}[htp!]
\begin{center}$
\begin{array}{cc}
\includegraphics[height=2.5in, width=2.75in]{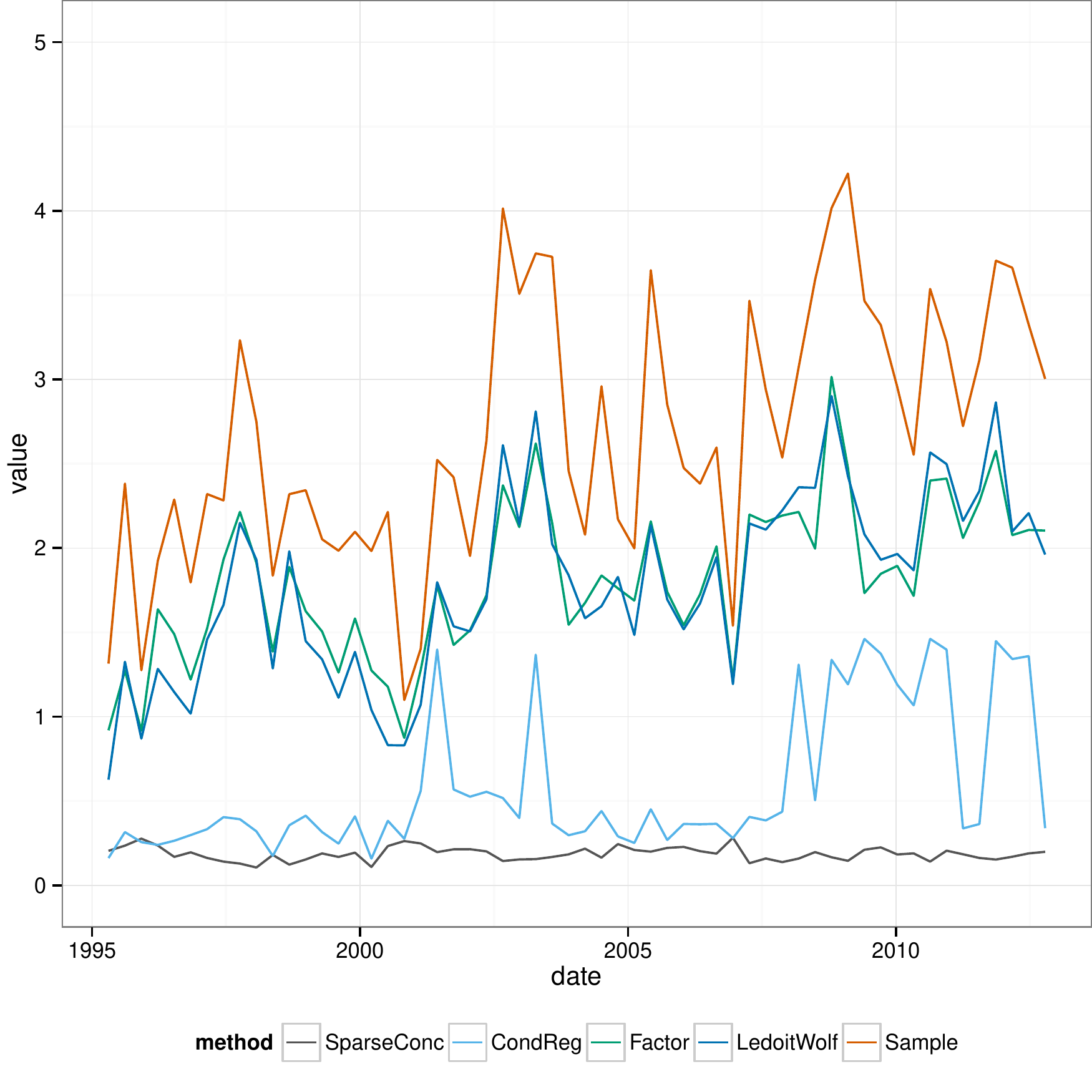} &
\includegraphics[height=2.5in, width=2.75in]{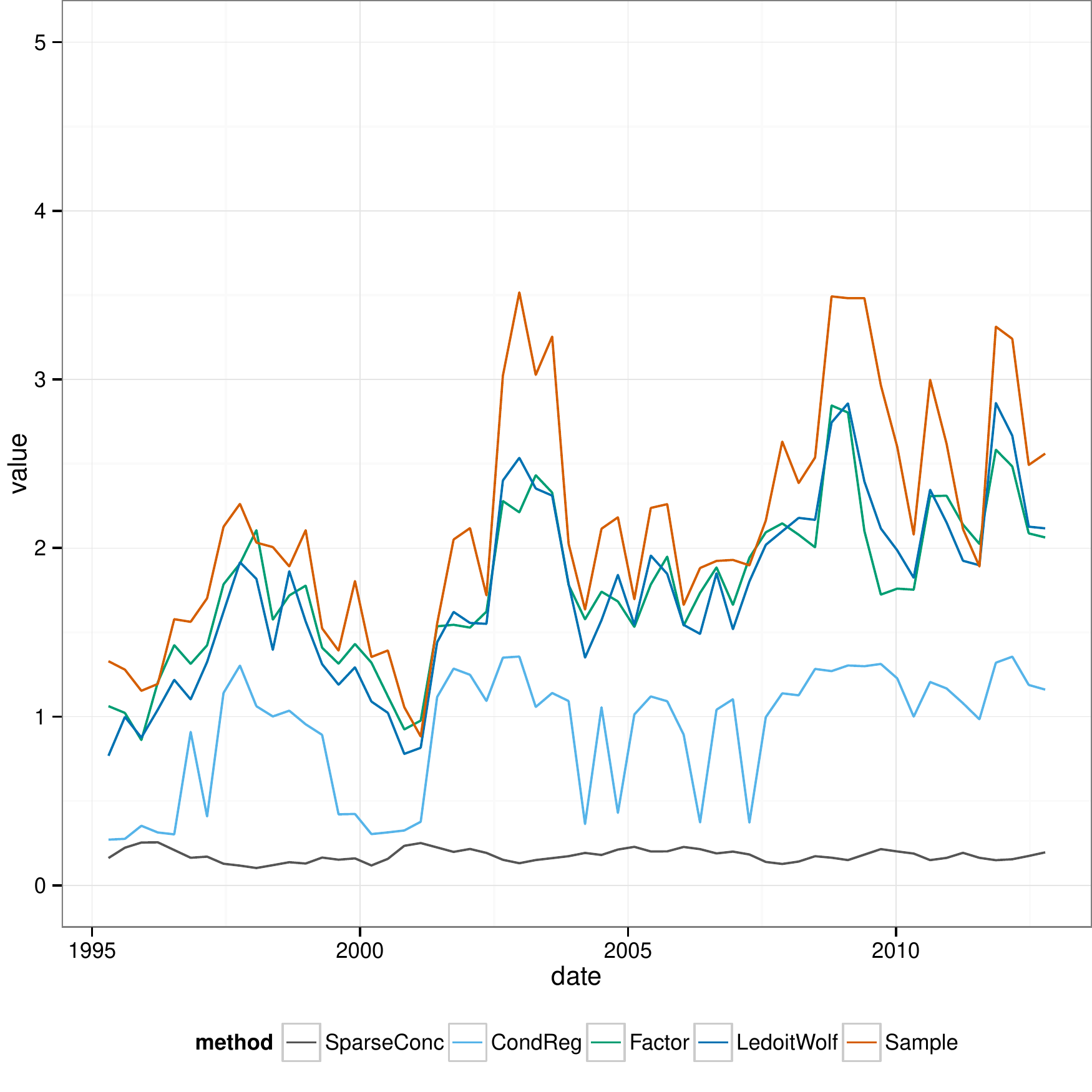} \\
\includegraphics[height=2.5in, width=2.75in]{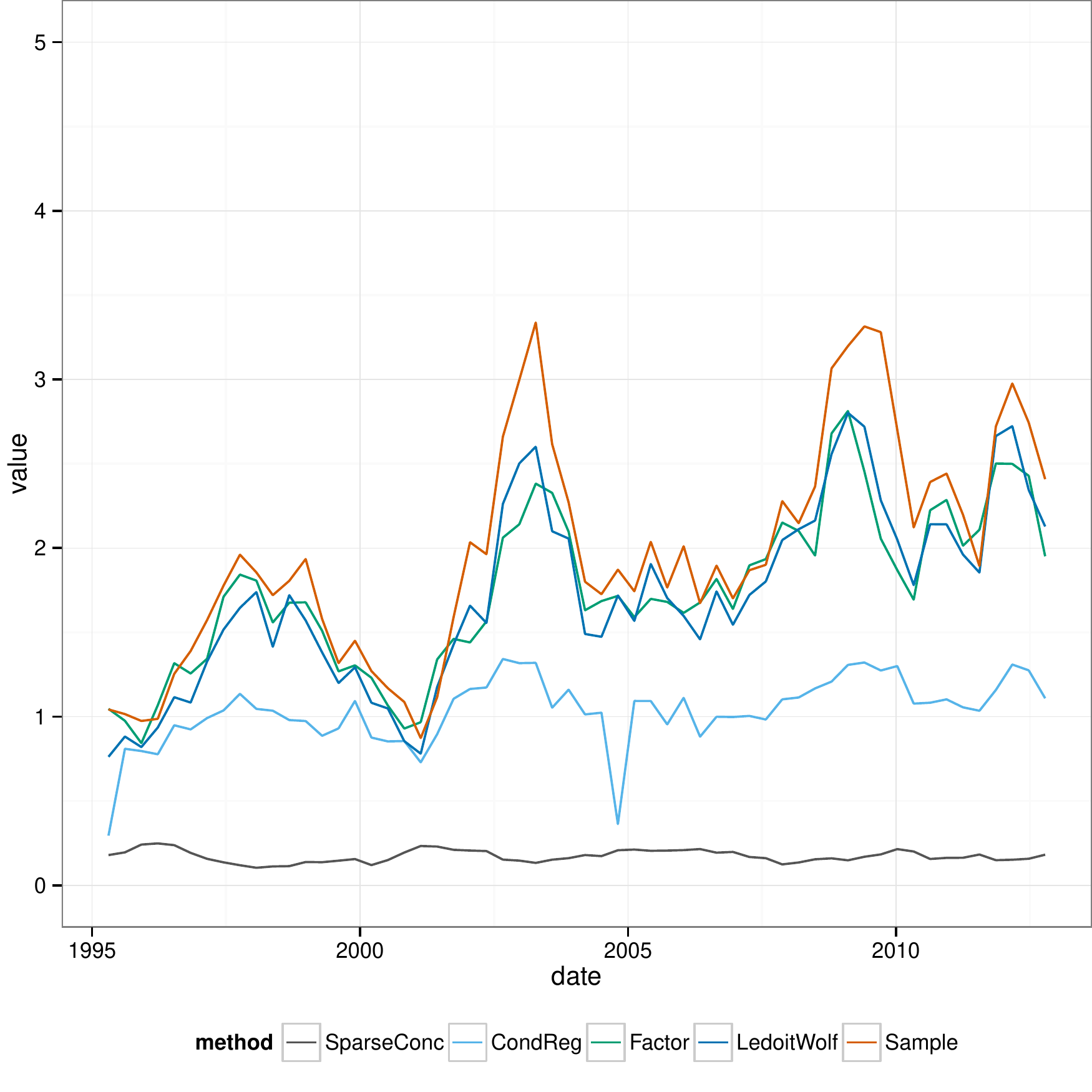} &
\includegraphics[height=2.5in, width=2.75in]{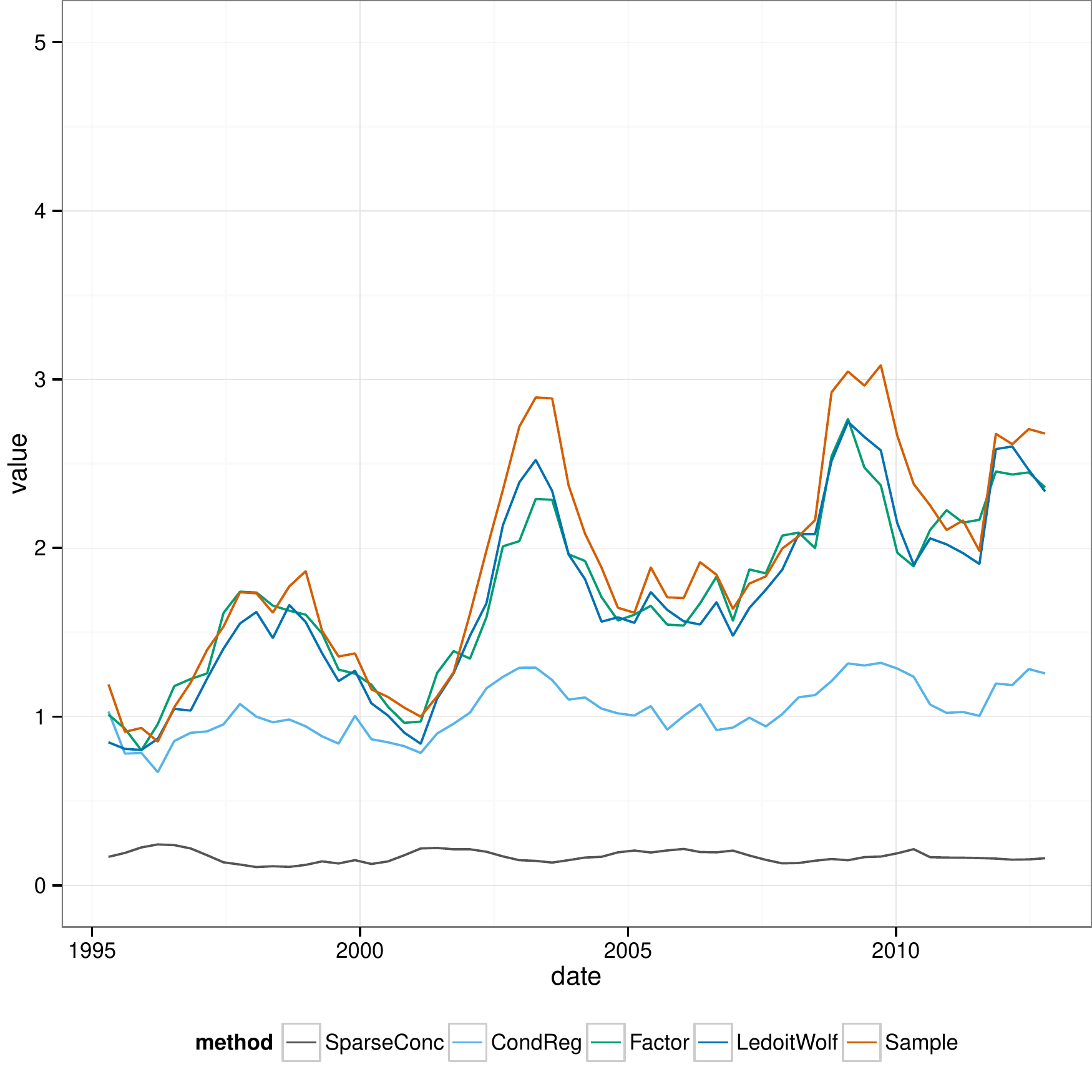} 
\end{array}$
\end{center}
\caption{Turnover of \text{G-AMA} vs. other methods ($\lambda = 0.1\times\|S\|_{2}$)}
\label{fig:expt9b}
\end{figure}

\subsection*{Comparison Results}
The normalized wealth growth for \text{G-AMA} is substantially better than other methods. G-AMA also gives high values of Sharpe ratio across estimation horizons. An advantage of \text{G-AMA} is that the turnover is lower than for other methods and the size of short-side is negligible. G-AMA gives rise to a stable portfolio which avoids excessive continuous re-balancing, transaction and borrowing costs. So in this sense, G-AMA's net effective wealth growth are even higher than those depicted in Fig. \ref{fig:expt9}.

\section{Conclusion}
This paper proposes a novel sparse inverse covariance estimation method \text{G-AMA} that adds and extends the sparse inverse covariance estimation literature in different directions. In particular, \text{G-AMA} is numerically faster than the previous best algorithms by several orders of magnitude in several regimes, while retaining all the good theoretical convergence properties of recently proposed methods. Moreover, \text{G-AMA} can solve ill-conditioned problems more accurately where existing methods do not converge or are extremely slow. The extra speed and ability to solve poorly conditioned problems makes \text{G-AMA} highly attractive for modern day applications, where the inverse covariance matrix has to be estimated multiple times. The \text{G-AMA} formulation also facilitates incorporating domain specific information while still maintaining linear convergence.

\appendix
 
\section{Algorithm Details}

\subsection{Termination criteria}
The algorithm is terminated when $Z_{k}$ is positive definite and a tolerance of $\epsilon_{\texttt{opt}}$ is attained by the duality gap. The duality gap is given by
\begin{alignat}{1}
\Delta_{\text{opt}} &= -\log\det Y_{k} - p - \log\det Z_{k} + \langle S, Z_{k} \rangle + \lambda\|Z_{k}\|_{1}. \nonumber
\end{alignat}
When the algorithm terminates, the covariance estimate and the sparse inverse covariance estimate are given by $Y_{k+1}$ and $Z_{k+1}$ respectively.

An additional criteria of tolerance of $\epsilon_{\text{prim}}$ can be imposed on the progress of $Y_{k}$ iterates
\begin{alignat}{1}
\Delta_{\text{prim}} &= \frac{\|Y_{k+1}-Y_{k}\|_{F}}{\|Y_{k}\|_{F}}. \nonumber
\end{alignat}
This quantity is related to the primal constraint violation as
\begin{alignat}{1}
\text{primal feasibility} &= \|X_{k}-Z_{k}\|_{F} = \frac{\|Y_{k+1}-Y_{k}\|_{F}}{\tau}. \nonumber
\end{alignat}
Therefore, the tolerance of $\epsilon_{\text{prim}}$ indirectly imposes primal feasibility. 

\subsection{Choice of step size}
Recall that the step size has to be chosen in equation (5). The step size for each iteration is chosen starting with an adaptive Barzilia-Borwein (\text{BB}) step \cite{barzilai1988two}. 

For a gradient descent algorithm minimizing a convex function $f(x)$ with gradient $\nabla{f}(x) = g$, the BB step size $\tau$ maximizes 
\begin{alignat}{1}
\label{bbss}
\tau &= {\argmax{\tau} \|\tau \Delta x - \Delta g\|_{2}} = \frac{\la \Delta x, \Delta x\ra}{\la \Delta g, \Delta x\ra},
\end{alignat}
which is a two-point approximation to the secant equation. 

An equivalent step size for \text{G-AMA} solving \ref{bbss} gives
\begin{alignat}{1}
\label{stepsize}
\tau_{k+1} &= \frac{\la Y_{k+1}-Y_{k}, Y_{k+1}-Y_{k} \ra}{\la Y_{k+1}-Y_{k}, X_{k}-X_{k+1} \ra}
\end{alignat}
is used in the numerical experiments. This heuristic provides substantial acceleration for convergence as compared to the fixed step size. However, the \text{BB} step might not always be feasible or satisfy sufficient descent condition. Hence, starting with the \text{BB} step we backtrack until the sufficient descent condition is satisfied.

\section{Lemmas and Proofs from Section 3}

\subsection{Proof of Lemma \ref{fixedpointlemma}}
\begin{proof}
Starting with $X_{*}$ satisfying (\ref{optcond}), we have three cases. \\
If $X_{*}(i,j) = 0$, 
$$X_{*}^{-1}(i,j) - S(i,j) = {\cal C}_{\lambda}\left(\tau X_{*}(i,j) + (X_{*}^{-1}(i,j) - S(i,j))\right), \ \ \because \ \ \tau X_{*}(i,j) = 0.$$
Else if $X_{*}(i,j) > 0$ and $X_{*}^{-1}(i,j) - S(i,j) = \lambda$, we have 
$${\cal C}_{\lambda}\left(\tau X_{*}(i,j) + (X_{*}^{-1}(i,j) - S(i,j))\right) = \lambda, \ \ \because \ \  \tau X_{*}(i,j) > 0.$$
And if $X_{*}(i,j) < 0$ and $X_{*}^{-1}(i,j) - S(i,j) = -\lambda$, we have 
$${\cal C}_{\lambda}\left(\tau X_{*}(i,j) + (X_{*}^{-1}(i,j) - S(i,j))\right) = -\lambda, \ \ \because \ \ \tau X_{*}(i,j) < 0.$$
Hence, the inverse of $X_{*}$ is a fixed point of (\ref{fixedpoint}). 

For the reverse result, we begin with the inverse of a fixed point $X_{*} = Y_{*}^{-1}$. Therefore, we have,
\begin{alignat}{1}
X_{*}^{-1} - S &= {\cal C}_{\lambda}\left(\tau X_{*} + (X_{*}^{-1} - S) \right) \nonumber
\end{alignat}
The entries on the right hand side are contained in the interval $[-\lambda, \lambda]$ and the additional term of $\tau X_{*}$ in the argument of clip function implies that $X_{*}(i,j)$ has to be $0$ when $|X_{*}^{-1} - S| < \lambda$ and $X_{*}(i,j)$ is non-zero only if $(X_{*}^{-1} - S)(i,j) = \pm \lambda$ with same sign as $X_{*}(i,j)$. Thus $X_{*}^{-1} - S$ satisfy both the conditions in (\ref{optcond}) proving the required result.
\end{proof}

\subsection{Contraction of \text{G-AMA} iterates}
Using Lemma \ref{lipschitz}, we now prove the following lemma which is a key ingredient for showing upper bound on iterates of \text{G-AMA} as well as global linear convergence of \text{G-AMA}.
\begin{lemma}
\label{strongconvexlemma}
Let 
\begin{alignat}{1}
Y_{+} = {\cal C}_{\lambda}\left( Y + \tau Y^{-1} - S \right) + S, 
\end{alignat}
be from (\ref{gamaiter}) and let $Y_{*}$ be a fixed point from equation (\ref{fixedpoint}), i.e., 
\begin{alignat}{1}
\label{scl2}
Y_{*} = {\cal C}_{\lambda}\left( Y_{*} + \tau Y_{*}^{-1} - S \right) + S. 
\end{alignat}
Also, define
\begin{alignat}{1}
\alpha = \min\left\{ \lambda_{\min}(Y), \lambda_{\min}(Y_{*}) \right\}, \beta = \max\left\{ \lambda_{\max}(Y), \lambda_{\max}(Y_{*}) \right\}. \nonumber
\end{alignat}
Then  
\begin{alignat}{1}
\|Y_{+} - Y_{*}\|_{F} \leq \displaystyle\max\left\{\bigg|1-\frac{\tau}{\alpha^{2}}\bigg|, \bigg|1-\frac{\tau}{\beta^{2}}\bigg|\right\} \|Y - Y_{*}\|_{F}. \nonumber
\end{alignat}
\end{lemma}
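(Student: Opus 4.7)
My plan is to split the argument into two clean pieces: first use non-expansiveness of the clip projection to strip off the outer ${\cal C}_{\lambda}$, then show that the inner map $Y\mapsto Y+\tau Y^{-1}$ is a contraction on the spectral window $[\alpha,\beta]$ with exactly the advertised factor.

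For the first piece, I would subtract the two fixed-point relations to obtain
\begin{alignat}{1}
Y_{+}-Y_{*} \;=\; {\cal C}_{\lambda}\!\bigl(Y+\tau Y^{-1}-S\bigr)-{\cal C}_{\lambda}\!\bigl(Y_{*}+\tau Y_{*}^{-1}-S\bigr). \nonumber
\end{alignat}
Because ${\cal C}_{\lambda}$ is the entrywise projection onto $[-\lambda,\lambda]$, it is $1$-Lipschitz in the Frobenius norm, so
$\|Y_{+}-Y_{*}\|_{F}\le \|(Y-Y_{*})+\tau(Y^{-1}-Y_{*}^{-1})\|_{F}$.
Everything that follows focuses on bounding this remaining quantity.

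For the second piece, I would use the integral representation
\begin{alignat}{1}
Y^{-1}-Y_{*}^{-1} \;=\; -\int_{0}^{1} Y_{t}^{-1}(Y-Y_{*})Y_{t}^{-1}\,dt, \qquad Y_{t}=(1-t)Y_{*}+tY, \nonumber
\end{alignat}
together with the fact that the eigenvalues of $Y_{t}$ stay inside $[\alpha,\beta]$ along the whole segment (a direct consequence of Weyl's inequality applied to the convex combination of two symmetric positive-definite matrices). Writing $H=Y-Y_{*}$, the integrand $H-\tau Y_{t}^{-1}HY_{t}^{-1}$ becomes, in vectorized form, $\bigl(I-\tau(Y_{t}^{-1}\otimes Y_{t}^{-1})\bigr)\mathrm{vec}(H)$. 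Since the spectrum of $Y_{t}^{-1}\otimes Y_{t}^{-1}$ lies in $[1/\beta^{2},1/\alpha^{2}]$ and this Kronecker operator is symmetric, its operator norm equals $\max\{|1-\tau/\alpha^{2}|,\,|1-\tau/\beta^{2}|\}$. Pulling the Frobenius norm inside the integral and applying the triangle inequality then yields the claimed contraction factor.

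The only delicate step is verifying the eigenvalue containment $\lambda(Y_{t})\subset[\alpha,\beta]$ for all $t\in[0,1]$; once Weyl's inequality is invoked to justify this, the Kronecker-product norm computation is routine and the resulting bound is tight. The result then combines directly with the Lipschitz estimate of Lemma~\ref{lipschitz} and reproduces the precise constants stated in the lemma, which is exactly what downstream arguments (in Theorem~\ref{t:gama}) require.
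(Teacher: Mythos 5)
Your proposal is correct and follows essentially the same route as the paper: both strip off ${\cal C}_{\lambda}$ by non-expansiveness of the projection and then bound $\|(Y+\tau Y^{-1})-(Y_{*}+\tau Y_{*}^{-1})\|_{F}$ through the spectrum of $I_{p^{2}}-\tau Y_{t}^{-1}\otimes Y_{t}^{-1}$ along the segment joining $Y$ and $Y_{*}$, with Weyl's inequality confining the eigenvalues to $[\alpha,\beta]$. Your explicit integral representation of $Y^{-1}-Y_{*}^{-1}$ is simply the integral form of the mean-value argument the paper applies to $h_{\tau}(Y)=\mathrm{vec}(Y)+\mathrm{vec}(\tau Y^{-1})$.
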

\begin{proof}
We begin with the left hand side and substituting (\ref{gamaiter}) and (\ref{fixedpoint}) and using the non-expansive property (see Lemma \ref{proxop}) of the ${\cal C}_{\lambda}$ function (projection onto a convex set), we get
\begin{alignat}{1}
\label{above2}
\|Y_{+} - Y_{*}\|_{F} & = \|{\cal C}_{\lambda}\left(Y + \tau Y^{-1} - S\right) - {\cal C}_{\lambda}\left(Y_{*} + \tau Y_{*}^{-1} - S\right)\|_{F}, \nonumber \\
& \leq \|\left(Y + \tau Y^{-1}\right) - \left(Y_{*} + \tau Y_{*}^{-1} \right)\|_{F}, 
\end{alignat}
To bound the later expression we follow the arguments from \cite[Lemma 3]{brajarat2012nips} and define a function
\begin{alignat}{1}
h_{\tau}(Y) &= \vect(Y) + \vect(\tau Y^{-1}). \nonumber
\end{alignat}
Using $h_{\tau}$, (\ref{above2}) can be written as 
\begin{alignat}{1}
\|\left(Y + \tau Y^{-1}\right) - \left(Y_{*} + \tau Y_{*}^{-1} \right)\|_{F} &= \|h_{\tau}(Y) - h_{\tau}(Y_{*})\|_{2}. \nonumber
\end{alignat}
The Jacobian $J_{h_{\tau}}$ of $h_{\tau}$ is given by 
\begin{alignat}{1}
J_{h_{\tau}}(Y) &= I_{p^{2}} - \tau Y^{-1} \otimes Y^{-1}. \nonumber
\end{alignat}
The function $h_{\tau}(Y)$ is a differentiable on ${\cal D}$. Therefore applying the mean value theorem gives 
\begin{alignat}{1}
\|h_{\tau}(Y) - h_{\tau}(Y_{*})\|_{2} & \leq \sup_{\delta \in [0,1]} \left\{\|I_{p^{2}} - \tau Y_{\delta}^{-1} \otimes Y_{\delta}^{-1}\|_{2}\right\}\|\vect(Y) - \vect(Y_{*})\|_{2}, \nonumber \\ 
& \leq \sup_{\delta \in [0,1]} \left\{\|I_{p^{2}} - \tau Y_{\delta}^{-1} \otimes Y_{\delta}^{-1}\|_{2}\right\}\| Y - Y_{*}\|_{F}, \nonumber
\end{alignat}
where $Y_{\delta} = \delta Y + (1-\delta)Y_{*}$ is some convex combination of $Y$ and $Y_{*}$. The eigenvalues of $Y_{\delta}$ are bounded using Weyl's inequality which gives
\begin{alignat}{1}
\alpha I \preceq Y_{\delta} \preceq \beta I. \nonumber
\end{alignat}
The eigenvalues of $I_{p^{2}} - \tau Y_{\delta}^{-1} \otimes Y_{\delta}^{-1}$ are $1 - \tau \left[\text{eig}(Y_{\delta})^{-2}\right]$ and therefore the $\sup$ term can be bounded as 
\begin{alignat}{1}
\sup_{\delta \in [0,1]} \left\{\|I_{p^{2}} - \tau Y_{\delta}^{-1} \otimes Y_{\delta}^{-1}\|_{2}\right\} & = \displaystyle\max\left\{\bigg|1-\frac{\tau}{\alpha^{2}}\bigg|, \bigg|1-\frac{\tau}{\beta^{2}}\bigg|\right\}, \nonumber
\end{alignat}
thereby completing the proof.
\end{proof}

\subsection{Proof of Lemma \ref{upperbound}}
\begin{proof}
We begin with the maximum eigenvalue of $Y_{k+1}$. 
\begin{alignat}{1}
\lambda_{\max}(Y_{k+1}) &= \|Y_{k+1}\|_{2} = \left(\|Y_{k+1}\|_{2} - \|Y_{*}\|_{2} \right) + \|Y_{*}\|_{2} \nonumber \\
& \leq \|Y_{k+1} - Y_{*}\|_{2} + \|Y_{*}\|_{2} \nonumber \\
& \leq \|Y_{k+1} - Y_{*}\|_{F} + \|Y_{*}\|_{2} \nonumber
\end{alignat}
Since $\tau_{k} < \alpha_{k}^{2} \leq \beta^{2}$, 
\begin{alignat}{1}
\displaystyle\max\left\{\bigg|1-\frac{\tau_{k}}{\alpha_{k}^{2}}\bigg|, \bigg|1-\frac{\tau_{k}}{\beta^{2}}\bigg|\right\} \leq 1, \nonumber
\end{alignat} 
and therefore,
\begin{alignat}{1}
\lambda_{\max}(Y_{k+1}) & \leq \|Y_{k} - Y_{*}\|_{F} + \|Y_{*}\|_{2}, \nonumber
\end{alignat}
and applying Lemma \ref{strongconvexlemma} repeatedly for $\|Y_{k} - Y_{*}\|_{F} \leq \|Y_{k-1} - Y_{*}\|_{F}$, we have
\begin{alignat}{1}
\lambda_{\max}(Y_{k+1}) \leq \|Y_{0} - Y_{*}\|_{F} + \|Y_{*}\|_{2}, \nonumber
\end{alignat}
which proves the required result with inductive argument on $k$.
\end{proof}

\subsection{Proof of Lemma \ref{lowerbound}}
\begin{proof}
Denote by $a = \lambda_{\min}(Y)$ the smallest eigenvalue of $Y$. By Lemma \ref{upperbound} we have
\begin{alignat}{1}
\log\det Y_{0} < \log\det Y \leq \log(a) + (p-1)\log(\beta) \nonumber
\end{alignat}
For the initial point $Y_{0} = S+\lambda I$, the minimum eigenvalue of $Y_{0}$ satisfies $\lambda \leq \lambda_{\min}(Y_{0})$ and $p\lambda \leq \log\det Y_{0}$. Therefore,
\begin{alignat}{1}
\lambda_{\min}(Y) = a > \left(\det Y_{0}\right)^{p}\beta^{-(p-1)} \geq \lambda^{p}\beta^{-(p-1)}. \nonumber
\end{alignat}
Hence proved.
\end{proof}

\subsection{Proof of Theorem \ref{t:gama}}
\begin{proof}
The iterations in Algorithm \ref{a:gama} can be rewritten by substituting $X_{k+1}$ and $Z_{k+1}$ in terms of $Y_{k}$ as
\begin{alignat}{1}
Y_{k+1} &= Y_{k} + \tau_{k}\left( Y_{k}^{-1} - {\cal S}_{\lambda/\tau}( Y_{k}^{-1} + (Y_{k} -S)/\tau_{k})\right), \nonumber \\
 &= {\cal C}_{\lambda}\left(Y_{k} + \tau Y_{k}^{-1} - S \right) + S. \nonumber
\end{alignat}
The initial point
\begin{alignat}{1}
Y_{0} = S + \lambda I, \nonumber
\end{alignat}
and the subsequent iterates satisfy $\alpha I \preceq Y_{k} \preceq \beta I$ by Lemma \ref{upperbound} and Lemma \ref{lowerbound} for $\alpha, \beta$ defined in the corresponding lemmas. Therefore, using Lemma \ref{strongconvexlemma}, we get equation (\ref{mt1})
\begin{alignat}{1}
\|Y_{k+1} - Y_{*}\|_{F} \leq \displaystyle\max\left\{\bigg|1-\frac{\tau_k}{\alpha^{2}}\bigg|, \bigg|1-\frac{\tau_k}{\beta^{2}}\bigg|\right\} \|Y_{k} - Y_{*}\|_{F}, \nonumber
\end{alignat}
for $k = 0, 1 \ldots$. And the condition on the constant step size $\tau_k < \alpha^2, \forall k$ gives (\ref{mt2}),
\begin{alignat}{1}
\|Y_{k+1} - Y_{*}\|_{F} \leq \gamma \|Y_{k} - Y_{*}\|_{F}, \nonumber
\end{alignat}
thereby proving the linear convergence of Algorithm \ref{a:gama}.
\end{proof}

\subsection{Proof of Theorem \ref{t:gengama}}
\begin{proof}
(a) Using Lemma \ref{proxop}, the modified $Y_{+}$ update (\ref{boundconupdate}) satisfies
\begin{alignat}{1}
\|Y_{+} - Y_{*}\|_{F} & = \|\Pi_{{\cal D}}(Y + \tau Y^{-1}) - \Pi_{{\cal D}}(Y_{*} + \tau Y_{*}^{-1})\|_{F} \nonumber \\
& \leq \|\left(Y + \tau Y^{-1}\right) - \left(Y_{*} + \tau Y_{*}^{-1} \right)\|_{F}, 
\end{alignat}
and the rest of the proof of Lemma \ref{strongconvexlemma} follows accordingly. Similarly the proof of Theorem \ref{t:gama} can easily be adapted to prove the linear convergence of the modified algorithm.\\
(b) To prove the convergence of (\ref{mod2iter}), we use a similar mean-value theorem argument on the function $Y + \tau A^{*}(S + BYA^{*})^{-1}B$, namely,
\begin{alignat}{1}
\|Y_{+} - Y_{*}\|_{F} & \leq \|Y - Y_{*} + \tau A^{*}\left((S + BYA^{*})^{-1} - (S + BY_{*}A^{*})^{-1}\right)B\|_{F} \nonumber \\
& \leq \displaystyle\max\left\{|1-{\tau}{v}|, |1- {\tau}{w}|\right\} \|Y - Y_{*}\|_{F}, 
\end{alignat}
where $v$ and $w$ are constants which depend on $S$, $\rho(A^{*}A)$, $\rho(B^{*}B)$ and $\lambda$.
\end{proof}

\section*{Appendix C: Portfolio Optimization Background}

The formal definitions of the performance metrics used in Section 5.3 as taken from \cite{won2012condition} are as follows
\begin{itemize}\itemsep-2pt
\item {\it Realized return.}
The realized return of a portfolio rebalancing strategy over the entire trading period is
\begin{linenomath*}
\[
r_p = \frac{1}{K}\sum_{j=1}^K \frac{1}{L}\sum^{N_{\text{estim}}+jL}_{t=N_{\text{estim}}+1+(j-1)L}r^{(t) T}w^{(j)}.
\]
\end{linenomath*}
\item {\it Realized risk.}
The realized risk (return standard deviation) of a portfolio rebalancing strategy over the entire trading period is
\begin{linenomath*}
\[
\sigma_p = \sqrt{\frac{1}{K}\sum_{j=1}^K \frac{1}{L}\sum^{N_{\text{estim}}+jL}_{t=N_{\text{estim}}+1+(j-1)L}(r^{(t) T}w^{(j)})^2 - r_p^2}.
\]
\end{linenomath*}
\item {\it Realized Sharpe ratio (SR).}
The realized Sharpe ratio, \ie, the ratio of the excess expected return of a portfolio rebalancing strategy relative to the risk-free return $r_f$ is given by
\begin{linenomath*}
\[
SR = \frac{r_p-r_f}{\sigma_p}.
\]
\end{linenomath*}
\item {\it Turnover.}
The turnover from the portfolio $w^{(j)}$ held at the start date of the $j$th holding period $[N_{\text{estim}}+1+(j-1)L,N_{\text{estim}}+jL]$
 to the portfolio $w^{(j-1)}$ held at the previous period is computed as
\begin{linenomath*}
\[
\text{TO}(j) = \sum^p_{i=1}\left|w^{(j)}_i-\left(\prod^{N_{\text{estim}}+jL}_{t=N_{\text{estim}}+1+(j-1)L}(1+r^{(t)}_{i})\right)w^{(j-1)}_i\right|.
\]
\end{linenomath*}
For the first period, we take $w^{(0)}=0$,
\ie, the initial holdings of the assets are zero.
\item {\it Normalized wealth growth.}
Let $w^{(j)}=(w^{(j)}_1,\ldots,w^{(j)}_n)$ be the portfolio constructed by a rebalancing strategy held over the $j$th holding
period $[N_{\text{estim}}+1+(j-1)L,N_{\text{estim}}+jL]$.
When the initial budget is normalized to one,
the normalized wealth grows according to the recursion
\begin{linenomath*}
\[
W(t) = \left\{\begin{array}{ll}
W(t-1)(1+\sum^p_{i=1}w_{it}r^{(t)}_{i}), &
t \not\in\{N_{\text{estim}}+jL \;|\; j=1,\ldots,K\},
\\
W(t-1)\left((1+\sum^p_{i=1}w_{it}r^{(t)}_{i})-\text{TC}(j)\right), &
t = N_{\text{estim}}+jL,
\end{array}
\right.
\]
\end{linenomath*}
for $t = N_{\text{estim}}, \ldots, N_{\text{estim}}+KL$,
with the initial wealth
$W(N_{\text{estim}}) = 1$.
Here
\begin{linenomath*}
\[
w_{it} = \left\{\begin{array}{cl}
w^{(1)}_i, &
t = N_{\text{estim}}+1, \ldots, N_{\text{estim}}+L,
\\
\vdots
\\
w^{(K)}_i, &
t = N_{\text{estim}}+1+(K-1)L, \ldots, N_{\text{estim}}+KL.
\end{array}
\right.
\]
\end{linenomath*}
and
\begin{linenomath*}
\[
\text{TC}(j) =
\sum^p_{i=1}\eta_i\left|w^{(j)}_i-\left(\prod^{N_{\text{estim}}+jL}_{t=N_{\text{estim}}+1+(j-1)L}(1+r^{(t)}_{i})\right)w^{(j-1)}_i\right|
\]
\end{linenomath*}
is the transaction cost due to the rebalancing if the cost to buy or sell one share of stock $i$ is $\eta_i$.

\item {\it Size of the short side.}
The size of the short side of a portfolio rebalancing strategy 
over the entire trading period is computed as
\begin{linenomath*}
\[
r_p = \frac{1}{K}\sum_{j=1}^K \left(
\sum_{i=1}^p |\min(w_{i}^{(j)},0)| / \sum_{i=1}^p |w_{i}^{(j)}| \right).
\]
\end{linenomath*}
\end{itemize}



\subsection*{Acknowledgment}
We would like to thank Dominique Guillot for useful discussions and suggestions for improving the paper. We would also like to thank the authors of \cite{brajarat2012nips} and \cite{hsieh2011sparse} for providing the code for G-ISTA and QUIC which served as a base for the code development for G-AMA.
\bibliographystyle{plain}
\bibliography{jmlr_gama}

\end{document}